\begin{document}
\title{Characterizing entanglement shareability and distribution in $N$-partite systems}
\author{Hui Li}
\affiliation{School of Mathematical Sciences, Hebei Normal University, Shijiazhuang 050024, China}
\affiliation{School of Science, Hebei University of Science and Technology, Shijiazhuang 050018, China}
\author{Ting Gao}
\email{gaoting@hebtu.edu.cn}
\affiliation{School of Mathematical Sciences, Hebei Normal University, Shijiazhuang 050024, China}
\author{Fengli Yan}
\email{flyan@hebtu.edu.cn}
\affiliation{College of Physics, Hebei Key Laboratory of Photophysics Research and Application, Hebei Normal University, Shijiazhuang 050024, China}
\begin{abstract}
Exploring the shareability and distribution of entanglement possesses fundamental significance in quantum information tasks. In this paper, we demonstrate that the square of bipartite entanglement measures $G_q$-concurrence, which is the generalization of concurrence, follows a set of hierarchical monogamy relations for any $N$-qubit quantum state. On the basis of these monogamy inequalities, we render two kinds of hierarchical indicators that exhibit evident advantages in the capacity of witnessing entanglement. Moreover, we show an analytical relation between $G_q$-concurrence and concurrence in $2\otimes d$ systems. Furthermore, we rigorously prove that the monogamy property of squared $G_q$-concurrence is superior to that of squared concurrence in $2\otimes d_2\otimes d_3\otimes\cdots\otimes d_N$ systems. In addition, several concrete examples are provided to illustrate that for multilevel systems, the squared $G_q$-concurrence satisfies the monogamy relation, even if the squared concurrence does not. These results better reveal the intriguing characteristic of multilevel
entanglement and provide critical insights into the entanglement distribution within multipartite quantum systems.
\end{abstract}



\maketitle

\section{Introduction}\label{I}
Multipartite quantum entanglement is an important resource that has demonstrated pronounced advantages in quantum information tasks such as quantum computation \cite{PRL87.047901,PRSLSA459.2011,PRL110.060504}, quantum simulation \cite{PMP86.153}, and quantum sensing \cite{RMP89.035002}. The monogamy of entanglement (MoE) is a highly essential characteristic of multipartite quantum systems, which indicates that quantum entanglement cannot be freely shared among all parties \cite{IJRD48.71, RMP81.865}. In addition, the MoE has critical applications in ensuring the security of quantum communication \cite{PRL69.2881,PRL70.1895,EPJB41.75,EPL84.50001,PRA74.022317,PRA93.042324,PRA83.022319}, quantum steering \cite{PRA88.062338,NJP16.083017}, and condensed matter physics \cite{NP7.399}.

Tracing the development of quantitative monogamy relation, starting from the pioneering work of Coffman $et~al$., a rigorous mathematical formulation for MoE based on squared concurrence (SC) was established in three-qubit systems \cite{PRA61.052306}. Subsequently, Osborne and Verstraete extended this relation to the $N$-qubit scenarios \cite{PRL96.220503},
\begin{equation}\label{I0}
\begin{aligned}
C^2(\rho_{A_1|A_2\cdots A_N})\geq C^2(\rho_{A_1A_2})+\cdots+C^2(\rho_{A_1A_N}),
\end{aligned}
\end{equation}
where
$C(\rho_{A_1|A_2\cdots A_N})$ is the bipartite entanglement of quantum state $\rho$ under the splitting $A_1|A_2\cdots A_N$, and $C(\rho_{A_1A_i})$ is the concurrence of $\rho_{A_1A_i}$ with $i=2,3,\cdots,N$. Later studies have employed diverse entanglement measures, such as squashed entanglement \cite{PRA69.022309,JMP45.829}, negativity \cite{PRA79.012329}, entanglement of formation \cite{PRL113.100503,PRA90.062343}, Tsallis-$q$ entanglement \cite{PRA81.062328,PRA93.062340}, and R\'{e}nyi-$\alpha$ entanglement \cite{PRA93.022306}, to describe the entanglement distribution of multipartite quantum systems \cite{PRA73.022325,PRA77.032329,PRL103.050501,PRA87.032330,PRL114.140402,PRA92.062345,PRL117.060501,PRA101.032301,QIP21.357,PRA110.012405,PRA86.062323,PRL112.180501,PRA109.012213}.

Furthermore, we proposed a generalized class of entanglement measures called $G_q$-concurrence, and proved that the squared $G_q$-concurrence (S$G_q$C) satisfies the monogamy relation
\begin{equation}\label{I1}
\begin{aligned}
\mathscr{C}_q^2(\rho_{A_1|A_2\cdots A_N})\geq\mathscr{C}_q^2(\rho_{A_1A_2})+\cdots+\mathscr{C}_q^2(\rho_{A_1A_N})
\end{aligned}
\end{equation}
for $1<q\leq2$ \cite{arXiv:2406}. The merit of this result lies in the fact that the entanglement indicators produced by S$G_q$C ($1<q<2$) can detect all genuinely multiqubit entangled states, thereby compensating the shortcomings of the entanglement indicator generated by SC. We also demonstrated that $G_q$-concurrence and concurrence can be linked through an analytical expression in $2\otimes2$ systems \cite{arXiv:2406}. To characterize the interesting entanglement features of multipartite quantum states more deeply, it is highly desirable to further examine whether such an analytical relation still exists in high-dimensional systems.

It is acknowledged that the SC obeys a family of hierarchical $k$-partite monogamy inequalities in $N$-qubit systems \cite{PRL96.220503}
\begin{equation}\label{I2}
\begin{aligned}
C^2(\rho_{A_1|A_2\cdots A_N})\geq \sum\limits_{i=2}^{k-1}C^2(\rho_{A_1A_i})+C^2(\rho_{A_1|A_k\cdots A_N}),
\end{aligned}
\end{equation}
which serves as a tool to probe entanglement in $k$-partite case with $3\leq k\leq N$. This naturally prompts us to consider the following questions: (Q1) Does S$G_q$C satisfy an analogous monogamy relation? (Q2) Are the monogamy properties of S$G_q$C and SC in the multipartite systems equivalent? (Q3) Are the monogamy inequalities of S$G_q$C applicable to multilevel quantum systems? and (Q4) Compared to SC, what advantages does the monogamy property of S$G_q$C have?

The structure of this paper is as follows. In Sec.~\ref{II}, we introduce some fundamental knowledge. In Sec.~\ref{III}, a set of hierarchical monogamy relations is provided for $N$-qubit quantum systems divided into $k$ parties. When $k=N$, the hierarchical $k$-partite monogamy relation is reduced to the result given in Ref. \cite{arXiv:2406}. As an ancillary result, the analytical relation between $G_q$-concurrence and concurrence in $2\otimes d$ systems is established. In Sec.~\ref{IV}, we construct two classes of entanglement indicators. In particular, we elucidate that the hierarchical indicator generated from S$G_q$C can effectively detect the entanglement of $W$ state, whereas its counterpart based on SC lacks this capability. In Sec.~\ref{V}, rigorous proof shows that the S$G_q$C exhibits better monogamy property than SC in $2\otimes d_2\otimes d_3\otimes\cdots\otimes d_N$ systems. To illustrate the applicability of the entanglement indicators of S$G_q$C in multilevel systems, we provide several concrete examples from two different scenarios: one where the first subsystem is qubit, and the other where it is multilevel. In Sec.~\ref{VI}, we make a summary.

\section{Preliminaries}\label{II}
In this section, we introduce the definitions of concurrence and $G_q$-concurrence, as well as the relation between them.

For any bipartite pure state $|\phi\rangle_{AB}$, the concurrence is defined as \cite{RMP81.865,PRL78.5022,PRL80.2245}
\begin{equation}
\begin{aligned}
C(|\phi\rangle_{AB})=\sqrt{2[1-{\rm Tr}(\rho_A^2)]},
\end{aligned}
\end{equation}
the $G_q$-concurrence ($q>1$) is \cite{arXiv:2406}
\begin{equation}
\begin{aligned}
\mathscr{C}_q(|\phi\rangle_{AB})=[1-{\rm Tr}(\rho_{A}^q)]^{\frac{1}{q}}.
\end{aligned}
\end{equation}
Here $\rho_{A}$ is the reduced density operator of $|\phi\rangle_{AB}$ with respect to subsystem $A$.

For any bipartite mixed state $\rho_{AB}$, the concurrence is given by
\begin{equation}
\begin{aligned}
C(\rho_{AB})=\min\limits_{\{p_i,|\phi_{i}\rangle_{AB}\}}\sum\limits_{i}p_iC(|\phi_i\rangle_{AB}),
\end{aligned}
\end{equation}
the $G_q$-concurrence ($q>1$) is
\begin{equation}
\begin{aligned}
\mathscr{C}_{q}(\rho_{AB})=\min\limits_{\{p_i,|\phi_{i}\rangle_{AB}\}}\sum\limits_{i}p_i\mathscr{C}_{q}(|\phi_i\rangle_{AB}).
\end{aligned}
\end{equation}
Here the minimum runs over all possible pure state decompositions $\{p_i,|\phi_i\rangle_{AB}\}$ of $\rho_{AB}$.

In addition, the analytical relation between $G_q$-concurrence and concurrence is \cite{arXiv:2406}
\begin{equation}\label{P0}
\begin{aligned}
\mathscr{C}_q(\rho_{AB})=h_q[C^2(\rho_{AB})]\\
\end{aligned}
\end{equation}
for any $2\otimes2$ quantum state $\rho_{AB}$, where
\begin{equation}\label{P1}
\begin{aligned}
h_q(t)=\Big[1-\Big(\frac{1+\sqrt{1-t}}{2}\Big)^q-\Big(\frac{1-\sqrt{1-t}}{2}\Big)^q\Big]^{\frac{1}{q}},\\
\end{aligned}
\end{equation}
and $1<q\leq2$.

\theoremstyle{remark}
\newtheorem{definition}{\indent Definition}
\newtheorem{lemma}{\indent Lemma}
\newtheorem{theorem}{\indent Theorem}
\newtheorem{proposition}{\indent Proposition}
\newtheorem{corollary}{\indent Corollary}
\newtheorem{example}{\indent Example}
\def\QEDclosed{\mbox{\rule[0pt]{1.3ex}{1.3ex}}}
\def\QED{\QEDclosed}
\def\proof{\noindent{\indent\em Proof}.}
\def\endproof{\hspace*{\fill}~\QED\par\endtrivlist\unskip}

\section{The monogamy relations for S$G_q$C}\label{III}
In this section, we first present the analytical formula between $G_q$-concurrence and concurrence in $2\otimes d$ quantum systems, which lays the foundation for showing that S$G_q$C satisfies the monogamy relations in tripartite systems and a set of hierarchical $k$-partite monogamy relations in $N$-qubit systems.

Before discussing the core results, we first present two lemmas.

\begin{lemma}\label{lemma 1}
The function $h_q(C^2)$ is a concave function with respect to $C^2$ for $1<q\leq2$.
\end{lemma}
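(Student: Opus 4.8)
The plan is to show directly that $h_q(t)$, viewed as a function of $t=C^2\in[0,1]$, has nonpositive second derivative on $(0,1)$ for $1<q\leq 2$. First I would make the substitution $s=\sqrt{1-t}\in[0,1)$, so that $t=1-s^2$ and
\begin{equation}
h_q = \Bigl[\,1-\Bigl(\tfrac{1+s}{2}\Bigr)^q-\Bigl(\tfrac{1-s}{2}\Bigr)^q\,\Bigr]^{1/q}\equiv g(s)^{1/q},
\end{equation}
where $g(s)=1-\bigl(\tfrac{1+s}{2}\bigr)^q-\bigl(\tfrac{1-s}{2}\bigr)^q$. Concavity of $h_q$ in $t$ is equivalent to $\frac{d^2}{dt^2}h_q\le 0$; using $\frac{d}{dt}=-\frac{1}{2s}\frac{d}{ds}$ one gets a clean expression for $\frac{d^2 h_q}{dt^2}$ in terms of $h_q'(s)$ and $h_q''(s)$ (with respect to $s$), namely $\frac{d^2h_q}{dt^2}=\frac{1}{4s^2}\bigl(h_q''(s)-\tfrac{1}{s}h_q'(s)\bigr)$, so it suffices to prove $s\,h_q''(s)\le h_q'(s)$ on $(0,1)$.

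Next I would compute $h_q'(s)$ and $h_q''(s)$ from $h_q=g^{1/q}$: $h_q'=\tfrac1q g^{1/q-1}g'$ and $h_q''=\tfrac1q g^{1/q-2}\bigl(g\,g''+(\tfrac1q-1)(g')^2\bigr)$. Here $g'(s)=-\tfrac{q}{2}\bigl[(\tfrac{1+s}{2})^{q-1}-(\tfrac{1-s}{2})^{q-1}\bigr]$ and $g''(s)=-\tfrac{q(q-1)}{4}\bigl[(\tfrac{1+s}{2})^{q-2}+(\tfrac{1-s}{2})^{q-2}\bigr]$. Substituting these, the inequality $s\,h_q''(s)\le h_q'(s)$ reduces (after clearing the positive factor $\tfrac1q g^{1/q-2}$) to an inequality of the form
\begin{equation}
s\Bigl[g g'' + (\tfrac1q-1)(g')^2\Bigr]\le g\,g',
\end{equation}
which is a concrete algebraic inequality in the three quantities $a=(\tfrac{1+s}{2})^q$, $b=(\tfrac{1-s}{2})^q$ and their lower-exponent analogues; I would introduce $x=\tfrac{1+s}{2}$, $y=\tfrac{1-s}{2}$ with $x+y=1$, $x-y=s$, $x,y\in(0,1)$, rewrite everything in $x,y$, and verify the polynomial/power inequality termwise, using $1<q\le 2$ (so $q-1\in(0,1]$ and $\tfrac1q-1\in[-\tfrac12,0)$) together with elementary bounds such as the convexity of $u\mapsto u^q$ and the mean-value comparison $x^{q-1}-y^{q-1}\le (q-1)(x-y)\max(x,y)^{q-2}$.

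The main obstacle I anticipate is the last step: the reduced inequality in $x,y$ is not a single obvious convexity statement, because the coefficient $\tfrac1q-1$ is negative, so the $(g')^2$ term works against us and the two competing terms must be balanced carefully. I would handle this by checking the boundary behaviour first — as $s\to 0^+$ one has $g\sim \tfrac{q(q-1)}{4}s^2$, $g'\sim \tfrac{q(q-1)}{4}s$, so both sides vanish to the same order and the leading constants can be compared explicitly, and as $s\to 1^-$ the dominant terms come from $x\to 1$, $y\to 0$ and again admit direct estimation — and then show the inequality has no interior crossing, either by a monotonicity argument in $q$ (reducing to the already-known case $q=2$, where $h_2(t)=\sqrt{t}$ is manifestly concave) or by factoring out $(q-1)$ and proving the remaining factor is nonnegative for all $s\in(0,1)$, $q\in(1,2]$. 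An alternative, possibly cleaner route I would keep in reserve is to express $h_q(t)$ as the composition $t\mapsto C\mapsto \mathscr{C}_q$ of the (concave, increasing) map $t\mapsto\sqrt t$ with a map that can be shown concave and increasing via the parametrization by the single eigenvalue $\lambda=\tfrac{1+\sqrt{1-t}}{2}$ of the qubit reduced state, invoking concavity of the entropy-like function $\lambda\mapsto (1-\lambda^q-(1-\lambda)^q)^{1/q}$.
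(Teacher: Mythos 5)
Your reduction is sound and, at bottom, it is the same strategy as the paper's: the paper also proves the lemma by computing $\partial^2 h_q/\partial t^2$ directly, writing it as a positive prefactor times an auxiliary function $M(t,q)$, and then deciding the sign of $M$ by showing there are no interior critical points and evaluating the boundary limits $t\to0^+$ and $t\to1^-$. The genuine gap is that your plan stops exactly where the work begins: the reduced inequality $s\bigl[g\,g''+(\tfrac1q-1)(g')^2\bigr]\le g\,g'$ is never actually established. ``Verify the polynomial/power inequality termwise'' is not backed by any computation, and the one concrete tool you cite is stated in the wrong direction: since $u\mapsto u^{q-1}$ is concave for $1<q\le2$, for $x>y>0$ one has $x^{q-1}-y^{q-1}\le (q-1)(x-y)\,y^{q-2}$, i.e.\ the upper bound involves $\min(x,y)^{q-2}$, whereas $(q-1)(x-y)\max(x,y)^{q-2}=(q-1)(x-y)x^{q-2}$ is a \emph{lower} bound. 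Your boundary analysis is also off: $g(0)=1-2^{1-q}>0$, so $g$ does not vanish as $s\to0^+$ (it vanishes at $s\to1^-$, i.e.\ $t\to0^+$); at $s=0$ both sides of the reduced inequality vanish only to first order with equal leading constants, so the sign is decided at the next order --- this is precisely the finite limit $\lim_{t\to1}\partial^2h_q/\partial t^2$ that the paper must evaluate, and its nonpositivity is exactly where the restriction $q\le2$ enters. The fact that the paper itself resorts to a two-variable critical-point analysis of $M(t,q)$ supported by plots is a strong hint that no simple termwise estimate closes the interior case; neither of your fallbacks (monotonicity in $q$ reducing to $q=2$, or factoring out $(q-1)$) is substantiated, and the monotonicity in $q$ of the second derivative is itself an unproven and nontrivial claim.

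The ``reserve'' route does not work either. As a function of $C$, $h_q(C^2)$ is \emph{convex} and increasing (the paper uses exactly this fact in the proof of Theorem 1), so writing $h_q(t)$ as a composition of that map with $C=\sqrt t$ composes a convex increasing function with a concave one, which yields no curvature conclusion. Likewise, in the parametrization $\lambda=\tfrac{1+\sqrt{1-t}}{2}$ the inner map $t\mapsto\lambda$ is concave and decreasing, and the outer map $\lambda\mapsto\bigl(1-\lambda^q-(1-\lambda)^q\bigr)^{1/q}$ is decreasing on $[\tfrac12,1]$, so the standard composition rules again give nothing: even granting concavity of the outer map in $\lambda$, ``decreasing concave composed with concave'' has no definite convexity. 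The whole content of the lemma is that the mild convexity in $C$ is still compatible with concavity in $C^2$, and that cannot be recovered by a composition argument; as it stands, your proposal does not prove the statement.
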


\begin{lemma}\label{lemma 2}
The function $h_q^2(C^2)$ is a monotonically increasing and convex function with respect to $C^2$ for $1<q\leq2$.
\end{lemma}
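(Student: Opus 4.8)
The plan is to set $t=C^{2}\in[0,1]$ and substitute $x=\sqrt{1-t}$, so that $\lambda_{\pm}=\tfrac{1\pm x}{2}$ are the two eigenvalues built into $h_q$ and $h_q^{2}(t)=F(t)^{2/q}$ with $F(t)=1-\lambda_{+}^{q}-\lambda_{-}^{q}$. Monotonicity is the easy half: differentiating gives $F'(t)=\tfrac{q}{4x}\bigl(\lambda_{+}^{q-1}-\lambda_{-}^{q-1}\bigr)\ge 0$ on $(0,1)$ because $\lambda_{+}\ge\lambda_{-}$ and $q-1>0$, and $h_q^{2}=F^{2/q}$ is an increasing function of the nonnegative quantity $F$ since $2/q>0$. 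The same computation records $F(0)=0$, hence $h_q^{2}(0)=0$, which is the property that later turns convexity into the superadditivity needed for the monogamy relations.

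For convexity I would compute
\begin{equation*}
\frac{d^{2}}{dt^{2}}h_q^{2}=\frac{2}{q}\,F^{\frac{2}{q}-2}\Bigl[\Bigl(\tfrac{2}{q}-1\Bigr)(F')^{2}+FF''\Bigr],
\end{equation*}
so, since $F>0$ on $(0,1]$, the statement is equivalent to $\bigl(\tfrac{2}{q}-1\bigr)(F')^{2}+FF''\ge 0$ on $(0,1)$. A useful structural fact is that $F$ is \emph{concave} in $t$: writing $\tau=1-t$ one finds $F=(1-2^{1-q})-2^{1-q}\sum_{k\ge 1}\binom{q}{2k}\tau^{k}$, and for $1<q\le 2$ every coefficient $\binom{q}{2k}$ is nonnegative (the product $q(q-1)(q-2)\cdots(q-2k+1)$ has two positive factors and an even number of nonpositive ones), so $F''\le 0$. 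This incidentally proves Lemma~\ref{lemma 1}, since $\tfrac{1}{q}-1<0$ makes both $\bigl(\tfrac{1}{q}-1\bigr)(F')^{2}$ and $FF''$ nonpositive. For Lemma~\ref{lemma 2}, however, concavity of $F$ only gives $FF''\le 0$, which has the \emph{wrong} sign; the real content is that for $q\le 2$ the positive term $\bigl(\tfrac{2}{q}-1\bigr)(F')^{2}$ must outweigh it.

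Carrying out the substitution, $\bigl(\tfrac{2}{q}-1\bigr)(F')^{2}+FF''\ge 0$ reduces, after clearing a positive factor, to the inequality
\begin{equation*}
(2-q)(\lambda_{+}-\lambda_{-})\bigl(\lambda_{+}^{q-1}-\lambda_{-}^{q-1}\bigr)^{2}\ge\bigl(1-\lambda_{+}^{q}-\lambda_{-}^{q}\bigr)\Bigl[(q-1)(\lambda_{+}-\lambda_{-})\bigl(\lambda_{+}^{q-2}+\lambda_{-}^{q-2}\bigr)-2\bigl(\lambda_{+}^{q-1}-\lambda_{-}^{q-1}\bigr)\Bigr].
\end{equation*}
The right-hand bracket is nonnegative — this is the concavity of $F$ once more, seen through the convexity of $u\mapsto u^{q-2}$ for $1<q\le 2$ (the trapezoid rule overestimates $\int_{\lambda_{-}}^{\lambda_{+}}u^{q-2}\,du$) — so both sides are nonnegative, and with $\lambda_{+}+\lambda_{-}=1$ what remains is a one-variable inequality in, say, $\lambda_{+}\in[\tfrac12,1]$. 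I would finish by rewriting the power differences via mean-value/integral representations — for instance the trapezoid remainder turns the right-hand bracket into $\tfrac{(q-1)(2-q)(3-q)}{6}(\lambda_{+}-\lambda_{-})^{3}\xi^{q-4}$ for some $\xi\in(\lambda_{-},\lambda_{+})$ — and then estimating the two sides against each other.

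The hard part is exactly this last inequality. It is tight: it becomes an equality at $q=2$ (where $h_q^{2}(C^{2})=C^{2}/2$ is affine) and also at $\lambda_{+}=\lambda_{-}$, so crude bounds on the mean-value points are too lossy and one must control the cancellation between the $(2-q)$-term and the remainder carefully. An alternative that keeps this cancellation under control is to exploit that the numbers $\binom{q}{2k}/(2^{q-1}-1)$ form a probability distribution on $\{1,2,\dots\}$ which is strongly concentrated at $k=1$ when $q$ is near $2$, so that $F$ stays close to affine and the positive term indeed dominates.
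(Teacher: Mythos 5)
Your monotonicity argument is correct and is essentially the paper's: the first derivative of $h_q^2$ in $t=C^2$ is manifestly nonnegative. Your reduction of convexity to the inequality $\bigl(\tfrac{2}{q}-1\bigr)(F')^{2}+FF''\ge 0$ with $F=1-\lambda_{+}^{q}-\lambda_{-}^{q}$ is also correct, and your binomial-series observation that $F$ is concave (all $\binom{q}{2k}\ge 0$ for $1<q\le 2$) is a genuinely nice piece of analysis --- it would in fact give a cleaner, fully analytic proof of Lemma~1 than the paper's own appendix. But for Lemma~2 it works \emph{against} you, as you note: $FF''\le 0$, so everything hinges on the displayed inequality in which the $(2-q)(F')^{2}$ term must dominate the trapezoid remainder.

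That inequality is exactly where your proposal stops. You verify that both sides are nonnegative, rewrite the right-hand bracket as $\tfrac{(q-1)(2-q)(3-q)}{6}(\lambda_{+}-\lambda_{-})^{3}\xi^{q-4}$, and then say you ``would finish by estimating the two sides against each other,'' conceding that crude bounds on the mean-value point $\xi$ are too lossy near the tight cases and offering only a heuristic (concentration of the weights $\binom{q}{2k}/(2^{q-1}-1)$ at $k=1$ for $q$ near $2$) that is neither quantified nor valid uniformly on $1<q\le 2$. So the convexity half of the lemma --- which is the half actually used in Theorem~2 --- is not proved; what you have is a correct reduction plus a plan. For comparison, the paper's Appendix~B proceeds differently: it writes $\partial^{2}h_q^{2}/\partial t^{2}$ as a positive prefactor times $M'(t,q)=\tfrac{2-q}{2^{q+1}}\xi_{1}+\xi_{2}(\xi_{3}-\xi_{4})$, argues (graphically) that $\nabla M'$ has no zeros in the open region, and checks the sign of the boundary limits $t\to 0$ and $t\to 1$; that argument is complete in structure though it leans on figures rather than estimates. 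To close your gap you must actually establish the two-sided inequality for all $\lambda_{+}\in[\tfrac12,1]$ and $q\in(1,2]$ --- for instance by pushing the series representation of $F$, $F'$, $F''$ through the inequality term by term --- or else fall back on a boundary/critical-point analysis of the type the paper uses.
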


The detailed proofs of Lemmas \ref{lemma 1} and \ref{lemma 2} can be referred to in Appendixes \ref{A} and \ref{B}, respectively.

It is well-established that the $G_q$-concurrence admits an analytical expression for any $2\otimes d$ pure state and $1<q\leq2$ \cite{arXiv:2406}. In the following, we derive an analogous analytical representation for arbitrary $2\otimes d$ mixed states based on Lemma \ref{lemma 1}.

\begin{theorem}\label{theorem 1}
For any $2\otimes d$ mixed state $\rho_{AB}$, the relation between $G_q$-concurrence and concurrence is
\begin{equation}\label{th10}
\begin{aligned}
\mathscr{C}_q(\rho_{AB})=h_q[C^2(\rho_{AB})],
\end{aligned}
\end{equation}
where $1<q\leq2$.
\end{theorem}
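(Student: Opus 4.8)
The plan is to reproduce, for $2\otimes d$, the convex‑roof argument behind the $2\otimes2$ identity \eqref{P0}. First, I would settle the pure‑state case: for a $2\otimes d$ pure state $|\phi\rangle_{AB}$ the reduced operator $\rho_A$ acts on a qubit, so its spectrum is $\{\lambda,1-\lambda\}$. Then $C^2(|\phi\rangle_{AB})=2[1-\lambda^2-(1-\lambda)^2]=4\lambda(1-\lambda)$, hence $\lambda=\tfrac{1}{2}\bigl(1\pm\sqrt{1-C^2(|\phi\rangle_{AB})}\bigr)$, and substituting this into $\mathscr{C}_q(|\phi\rangle_{AB})=[1-\lambda^q-(1-\lambda)^q]^{1/q}$ returns exactly $h_q[C^2(|\phi\rangle_{AB})]$ as defined in \eqref{P1}. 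Thus \eqref{th10} already holds on pure states (this is the $2\otimes d$ pure‑state formula of Ref.~\cite{arXiv:2406}), and it remains to promote it to mixed states via two matching inequalities.

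For $\mathscr{C}_q(\rho_{AB})\le h_q[C^2(\rho_{AB})]$, I would invoke that, since the first party is a qubit, $C(\rho_{AB})$ is attained on an optimal pure‑state decomposition $\{p_i,|\phi_i\rangle\}$ all of whose components carry the same concurrence $C(|\phi_i\rangle)=C(\rho_{AB})$ — the $2\otimes d$ analogue of Wootters' uniform decomposition, equivalently, the statement that the convex roof of $C^2$ equals $C^2(\rho_{AB})$. Then the pure‑state identity gives $\mathscr{C}_q(\rho_{AB})\le\sum_i p_i\mathscr{C}_q(|\phi_i\rangle)=\sum_i p_i h_q[C^2(|\phi_i\rangle)]=h_q[C^2(\rho_{AB})]$. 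For $\mathscr{C}_q(\rho_{AB})\ge h_q[C^2(\rho_{AB})]$, I would take an optimal decomposition $\{q_j,|\psi_j\rangle\}$ for $\mathscr{C}_q(\rho_{AB})$; as a decomposition of $\rho_{AB}$ it satisfies $\sum_j q_j C(|\psi_j\rangle)\ge C(\rho_{AB})$, and combining the pure‑state identity with the monotonicity of $h_q$ and the convexity of $c\mapsto h_q(c^2)$ on $[0,1]$ for $1<q\le2$ (a property checked directly from \eqref{P1}; cf.\ Lemma~\ref{lemma 1}) yields $\mathscr{C}_q(\rho_{AB})=\sum_j q_j h_q[C(|\psi_j\rangle)^2]\ge h_q\bigl[(\sum_j q_j C(|\psi_j\rangle))^2\bigr]\ge h_q[C^2(\rho_{AB})]$. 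The two bounds give \eqref{th10}.

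The step I expect to be the main obstacle is the uniform optimal concurrence decomposition used in the upper bound: the Wootters construction is classical only for $2\otimes2$, so one must argue that with a qubit first subsystem an optimal concurrence decomposition of a $2\otimes d$ mixed state can still be chosen with all components of equal concurrence (equivalently, that the convex roof of the squared concurrence collapses to $C^2$). Granting that structural input, together with the pure‑state computation and the concavity/convexity properties of $h_q$ from Lemma~\ref{lemma 1}, the remaining steps are routine.
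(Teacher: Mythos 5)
Your proposal is essentially the paper's own proof: the $2\otimes d$ pure-state identity is taken from the same prior reference, and your lower bound (optimal $\mathscr{C}_q$-decomposition, convexity of $c\mapsto h_q(c^2)$ in $c$, monotonicity of $h_q$) coincides with the paper's argument in Eq.~(\ref{th12}). The only divergence is how the upper bound is closed: you use a flat optimal concurrence decomposition (all components with concurrence $C(\rho_{AB})$), whereas the paper in Eq.~(\ref{th11}) takes an ensemble $\{q_j,|\phi_j\rangle_{AB}\}$ satisfying $C^2(\rho_{AB})=\sum_j q_j C^2(|\phi_j\rangle_{AB})$ and then applies the concavity of $h_q$ from Lemma~\ref{lemma 1}; by Cauchy--Schwarz the existence of such an ensemble is equivalent to the flat optimal decomposition you invoke, so the ``structural input'' you single out as the main obstacle is exactly what the paper also assumes at this step, stated there simply as the choice of ``the optimal ensemble decomposition of $C^2(\rho_{AB})$'' with no separate justification in the $2\otimes d$ setting. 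In short, relative to the paper your argument has no missing step, and the caveat you raise (that Wootters' equal-concurrence construction is only established for $2\otimes2$) applies equally to the published proof rather than to your proposal alone.
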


\begin{proof}
For an arbitrary $2\otimes d$ mixed state $\rho_{AB}$, suppose that $\{p_i,|\varphi_i\rangle_{AB}\}$ is the optimal pure state decomposition of $\mathscr{C}_q(\rho_{AB})$ such that $\mathscr{C}_q(\rho_{AB})=\sum_ip_i\mathscr{C}_q(|\varphi_i\rangle_{AB})$, and $\{q_j,|\phi_j\rangle_{AB}\}$ is the optimal ensemble decomposition of $C^2(\rho_{AB})$ such that $C^2(\rho_{AB})=\sum_jq_j C^2(|\phi_j\rangle_{AB})$, then one obtains
\begin{equation}\label{th11}
\begin{aligned}
\mathscr{C}_q(\rho_{AB})=&\sum_ip_i\mathscr{C}_q(|\varphi_i\rangle_{AB})\\
=&\sum_ip_ih_q[C^2(|\varphi_i\rangle_{AB})]\\
\leq&\sum_jq_jh_q[C^2(|\phi_j\rangle_{AB})]\\
\leq&h_q\bigg[\sum_jq_j C^2(|\phi_j\rangle_{AB})\bigg]\\
=&h_q[C^2(\rho_{AB})],\\
\end{aligned}
\end{equation}
where the second equality is because $\mathscr{C}_q(|\varphi_i\rangle_{AB})=h_q[C^2(|\varphi_i\rangle_{AB})]$ holds for any $2\otimes d$ pure state \cite{arXiv:2406}, the first inequality is according to the definition of $\mathscr{C}_q(\rho_{AB})$, and the second inequality is true owing to the fact that $h_q(C^2)$ is a concave function of $C^2$ for $1<q\leq2$.

On the other hand, let $\{s_k,|\psi_k\rangle_{AB}\}$ be the optimal decomposition of $C(\rho_{AB})$, and under the optimal pure decomposition  $\{p_i,|\varphi_i\rangle_{AB}\}$  of $\mathscr{C}_q(\rho_{AB})$, there is
\begin{equation}\label{th12}
\begin{aligned}
\mathscr{C}_q(\rho_{AB})=&\sum_ip_i\mathscr{C}_q(|\varphi_i\rangle_{AB})\\
=&\sum_ip_ih_q[C^2(|\varphi_i\rangle_{AB})]\\
\geq&h_q\Big[\Big(\sum_ip_iC(|\varphi_i\rangle_{AB})\Big)^2\Big]\\
\geq&h_q\Big[\Big(\sum_ks_kC(|\psi_k\rangle_{AB})\Big)^2\Big]\\
=&h_q[C^2(\rho_{AB})],\\
\end{aligned}
\end{equation}
where the first inequality holds because $h_q(C^2)$ is a convex function of $C$ for $1<q\leq2$ \cite{arXiv:2406}, the second inequality can be derived since $h_q(C^2)$ is a monotonically increasing function with respect to $C^2$.

Combing Eqs. (\ref{th11}) and (\ref{th12}), we have that the formula (\ref{th10}) is valid.
\end{proof}

Next, we present a kind of monogamy relations in tripartite quantum systems.

\begin{theorem}\label{theorem 2}
For any $2\otimes 2\otimes 2^{N-2}$ quantum state $\rho_{AB\mathbb{C}}$, the square of $G_q$-concurrence possesses the following relation
\begin{equation}\label{th30}
\begin{aligned}
\mathscr{C}_q^2(\rho_{A|B\mathbb{C}})\geq\mathscr{C}_q^2(\rho_{AB})+\mathscr{C}_q^2(\rho_{A\mathbb{C}}),\\
\end{aligned}
\end{equation}
where $1<q\leq2$.
\end{theorem}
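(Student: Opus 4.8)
The plan is to reduce the $2\otimes2\otimes2^{N-2}$ case to the known $2\otimes2$ analytical relation (\ref{P0})--(\ref{P1}) together with Theorem~\ref{theorem 1}, and then exploit the convexity established in Lemma~\ref{lemma 2} to transport the squared-concurrence monogamy inequality (\ref{I0}) of Osborne--Verstraete to the $G_q$-concurrence. First I would note that since $\rho_{AB\mathbb{C}}$ lives in $2\otimes2\otimes2^{N-2}$, the subsystem $A$ is a qubit, so $\rho_{AB}$ is a $2\otimes2$ state and $\rho_{A\mathbb{C}}$ is a $2\otimes 2^{N-2}$ state; by purification and Theorem~\ref{theorem 1} (applied with the qubit factor first), both satisfy $\mathscr{C}_q(\rho_{AB})=h_q[C^2(\rho_{AB})]$ and $\mathscr{C}_q(\rho_{A\mathbb{C}})=h_q[C^2(\rho_{A\mathbb{C}})]$. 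Likewise, viewing $B\mathbb{C}$ as one party against the qubit $A$, the bipartite cut is effectively a $2\otimes d$ state, so $\mathscr{C}_q(\rho_{A|B\mathbb{C}})=h_q[C^2(\rho_{A|B\mathbb{C}})]$ as well.

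Next I would invoke the $N$-qubit squared-concurrence monogamy inequality in its two-term form,
\begin{equation}
C^2(\rho_{A|B\mathbb{C}})\ \geq\ C^2(\rho_{AB})+C^2(\rho_{A\mathbb{C}}),
\nonumber
\end{equation}
which is exactly (\ref{I0}) with the parties $A_2,\dots,A_N$ regrouped into the two blocks $B$ and $\mathbb{C}$ (this regrouping is standard and follows from (\ref{I2}) with $k=3$, or directly from Osborne--Verstraete). Writing $x=C^2(\rho_{AB})$ and $y=C^2(\rho_{A\mathbb{C}})$, I then want
\begin{equation}
h_q^2(x+y)\ \geq\ h_q^2(x)+h_q^2(y).
\nonumber
\end{equation}
This is where Lemma~\ref{lemma 2} does the work: since $f(t):=h_q^2(t)$ is convex, nonnegative, monotonically increasing, and satisfies $f(0)=0$, it is superadditive on $[0,\infty)$, i.e. $f(x+y)\ge f(x)+f(y)$ for all $x,y\ge0$. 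Indeed, convexity with $f(0)=0$ gives $f(x)=f\!\big(\tfrac{x}{x+y}(x+y)+\tfrac{y}{x+y}\cdot 0\big)\le \tfrac{x}{x+y}f(x+y)$ and symmetrically for $f(y)$; adding yields the claim. Combining this with the monotonicity of $h_q^2$ applied to the squared-concurrence inequality gives
\begin{equation}
\mathscr{C}_q^2(\rho_{A|B\mathbb{C}})=h_q^2\big[C^2(\rho_{A|B\mathbb{C}})\big]\ \geq\ h_q^2\big[C^2(\rho_{AB})+C^2(\rho_{A\mathbb{C}})\big]\ \geq\ h_q^2\big[C^2(\rho_{AB})\big]+h_q^2\big[C^2(\rho_{A\mathbb{C}})\big]=\mathscr{C}_q^2(\rho_{AB})+\mathscr{C}_q^2(\rho_{A\mathbb{C}}),
\nonumber
\end{equation}
which is (\ref{th30}).

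The step I expect to require the most care is the bookkeeping that lets me apply Theorem~\ref{theorem 1} to each of the three reduced objects simultaneously: Theorem~\ref{theorem 1} is stated for $2\otimes d$ states with the qubit in the first slot, so I must be careful that $A$ is genuinely the qubit in every cut — $\rho_{AB}$ ($2\otimes2$), $\rho_{A\mathbb{C}}$ (qubit $\otimes\, 2^{N-2}$), and $\rho_{A|B\mathbb{C}}$ (qubit $\otimes\, 2^{N-1}$) — which is exactly why the hypothesis places the qubit factors as the first two tensor components. A secondary technical point is making the superadditivity argument for $h_q^2$ airtight at the boundary ($x=0$ or $y=0$, where the convex-combination step degenerates); this is handled trivially by monotonicity and $h_q^2(0)=0$. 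Everything else is a direct chaining of Lemma~\ref{lemma 2}, Theorem~\ref{theorem 1}, and the Osborne--Verstraete inequality, so no further heavy computation is needed.
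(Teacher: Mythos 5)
Your proposal is correct and follows essentially the same route as the paper: express each term via Theorem \ref{theorem 1} as $h_q^2[C^2(\cdot)]$, apply the Osborne--Verstraete squared-concurrence monogamy for the cut $A|B\mathbb{C}$ together with the monotonicity of $h_q^2$, and conclude by superadditivity of $h_q^2$ from Lemma \ref{lemma 2}. The only difference is that you spell out explicitly the standard convexity-plus-$h_q^2(0)=0$ argument for superadditivity, which the paper leaves implicit in its appeal to Lemma \ref{lemma 2}.
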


\begin{proof}
For any tripartite quantum state $\rho_{AB\mathbb{C}}$ in $2\otimes 2\otimes 2^{N-2}$ systems, we have
\begin{equation}\label{th41}
\begin{aligned}
&\mathscr{C}_q^2(\rho_{A|B\mathbb{C}})-\mathscr{C}_q^2(\rho_{AB})-\mathscr{C}_q^2(\rho_{A\mathbb{C}})\\
=&h_q^2[C^2(\rho_{A|B\mathbb{C}})]-h_q^2[C^2(\rho_{AB})]-h_q^2[C^2(\rho_{A\mathbb{C}})]\\
\geq& h_q^2[C^2(\rho_{AB})+C^2(\rho_{A\mathbb{C}})]-h_q^2[C^2(\rho_{AB})]
-h_q^2[C^2(\rho_{A\mathbb{C}})]\\
\geq&0.\\
\end{aligned}
\end{equation}
Here the first equality is obtained by utilizing Eq. (\ref{th10}), the first inequality holds because $h_q^2(C^2)$ is a monotonically increasing function of $C^2$ for $1<q\leq2$ and the SC satisfies the monogamy relation $C^2(\rho_{A|B\mathbb{C}})\geq C^2(\rho_{AB})+C^2(\rho_{A\mathbb{C}})$ \cite{PRL96.220503}, and the second inequality is true based on Lemma \ref{lemma 2}.
\end{proof}

If we divide system $\mathbb{C}$ into single qubit subsystem $C_1$ and $2^{\otimes{N-3}}$ subsystem $\mathbb{C}_2$, then by using the inequality (\ref{th41})  to the quantum state $\rho_{AC_1\mathbb{C}_2}$, one has
\begin{equation*}
\begin{aligned}
\mathscr{C}_q^2(\rho_{A|B\mathbb{C}})-\mathscr{C}_q^2(\rho_{AB})-\mathscr{C}_q^2(\rho_{AC_1})-\mathscr{C}_q^2(\rho_{A\mathbb{C}_2})\geq0.\\
\end{aligned}
\end{equation*}
Similarly, by further successive segmenting the last quantum system and then iteratively using inequality (\ref{th41}), we can obtain a family of hierarchical $k$-partite monogamy relations in $N$-qubit systems, as shown in the following theorem.

\begin{theorem}\label{theorem 3}
For any $N$-qubit quantum state $\rho_{A_1A_{2}\cdots A_N}$, a set of hierarchical $k$-partite monogamy relations is
\begin{equation}\label{th40}
\begin{aligned}
\mathscr{C}_q^2(\rho_{A_1|A_{2}\cdots A_N})\geq\sum\limits_{i=2}^{k-1}\mathscr{C}_q^2(\rho_{A_1A_{i}})+\mathscr{C}_q^2(\rho_{A_1|A_k\cdots A_N}),\\
\end{aligned}
\end{equation}
where $1<q\leq2$.
\end{theorem}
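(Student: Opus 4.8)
The plan is to prove Theorem~\ref{theorem 3} by induction on $k$, using Theorem~\ref{theorem 2} as the engine at each step. For the base case $k=3$, inequality~(\ref{th40}) reads $\mathscr{C}_q^2(\rho_{A_1|A_2\cdots A_N})\geq\mathscr{C}_q^2(\rho_{A_1A_2})+\mathscr{C}_q^2(\rho_{A_1|A_3\cdots A_N})$, which is exactly the content of Theorem~\ref{theorem 2} applied with $A=A_1$, $B=A_2$, and $\mathbb{C}=A_3\cdots A_N$, after observing that any $N$-qubit state, when its last $N-2$ qubits are regarded as a single party, is a state on a $2\otimes 2\otimes 2^{N-2}$ system, so the hypotheses of Theorem~\ref{theorem 2} are met.

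For the inductive step, I would assume~(\ref{th40}) holds for some $k$ with $3\le k<N$ and derive it for $k+1$. Starting from
\begin{equation*}
\begin{aligned}
\mathscr{C}_q^2(\rho_{A_1|A_{2}\cdots A_N})\geq\sum_{i=2}^{k-1}\mathscr{C}_q^2(\rho_{A_1A_{i}})+\mathscr{C}_q^2(\rho_{A_1|A_k\cdots A_N}),
\end{aligned}
\end{equation*}
I would split the block $A_k\cdots A_N$ into the single qubit $A_k$ and the remaining block $A_{k+1}\cdots A_N$, view $\rho_{A_1 A_k\cdots A_N}$ as a state on a $2\otimes 2\otimes 2^{N-k}$ system, and apply Theorem~\ref{theorem 2} (equivalently inequality~(\ref{th41})) to get $\mathscr{C}_q^2(\rho_{A_1|A_k\cdots A_N})\geq\mathscr{C}_q^2(\rho_{A_1A_k})+\mathscr{C}_q^2(\rho_{A_1|A_{k+1}\cdots A_N})$. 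Substituting this lower bound into the displayed inequality absorbs the term $\mathscr{C}_q^2(\rho_{A_1A_k})$ into the sum, turning $\sum_{i=2}^{k-1}$ into $\sum_{i=2}^{k}$ and replacing the tail term by $\mathscr{C}_q^2(\rho_{A_1|A_{k+1}\cdots A_N})$, which is precisely~(\ref{th40}) with $k$ replaced by $k+1$. This closes the induction.

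The only genuine subtlety — and it is minor — is making sure the bipartite/tripartite structure required by Theorem~\ref{theorem 2} is legitimately present at each stage: the first subsystem $A_1$ must remain a qubit (it does, since we never regroup $A_1$), the "second" subsystem must be a single qubit $A_i$ (true because $A_2,\dots,A_N$ are individual qubits), and the "third" subsystem is simply the grouped tail, whose dimension $2^{N-k}$ is irrelevant to Theorem~\ref{theorem 2} since that theorem already allows an arbitrary $2^{N-2}$-dimensional third party. Thus no new analytic input is needed beyond what has already been established; the bulk of the argument is the bookkeeping of the iterated application, which the paragraph preceding the theorem statement already sketches informally. I would therefore keep the written proof short, stating the induction explicitly and pointing to~(\ref{th41}) for the one-step inequality.
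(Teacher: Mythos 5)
Your proof is correct and follows essentially the same route as the paper: the paper establishes Theorem~\ref{theorem 3} exactly by iteratively splitting a single qubit off the tail block and reapplying inequality~(\ref{th41}) from Theorem~\ref{theorem 2}, which is what your induction on $k$ formalizes. Your explicit check that $A_1$ stays a qubit and that the grouped tail is a legitimate third party matches the paper's (more informal) sketch, so no further changes are needed.
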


Note that when $q=2$, the inequality (\ref{th40}) can be simplified to inequality (\ref{I2}); when $k=N$, the formula (\ref{th40}) is reduced to formula (\ref{I1}); when $q=2$ and $k=N$, the formula (\ref{th40}) is further reduced to inequality (\ref{I0}).

As a corollary of Theorem \ref{theorem 3}, we establish a family of analogical hierarchical monogamy inequalities in terms of the $\alpha$-th power of $G_q$-concurrence.

\begin{corollary}
For any $N$-qubit quantum state $\rho_{A_1A_{2}\cdots A_N}$, the $\alpha$-th ($\alpha\geq2$) power of $G_q$-concurrence satisfies a class of hierarchical $k$-partite monogamy relations
\begin{equation}\label{th5}
\begin{aligned}
\mathscr{C}_q^{\alpha}(\rho_{A_1|A_{2}\cdots A_N})\geq\sum\limits_{i=2}^{k-1}\mathscr{C}_q^{\alpha}(\rho_{A_1A_{i}})+\mathscr{C}_q^{\alpha}(\rho_{A_1|A_k\cdots A_N}),\\
\end{aligned}
\end{equation}
where $1<q\leq2$.
\end{corollary}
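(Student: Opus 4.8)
The plan is to deduce the corollary from Theorem~\ref{theorem 3} by the standard power-trick: if a hierarchical monogamy relation holds for the \emph{squared} $G_q$-concurrence, then raising everything to the power $\alpha/2\geq1$ can only help, because the map $x\mapsto x^{\alpha/2}$ is superadditive on the nonnegative reals for exponents $\geq1$. Concretely, write $a=\mathscr{C}_q^2(\rho_{A_1A_i})$ for $i=2,\dots,k-1$ and $b=\mathscr{C}_q^2(\rho_{A_1|A_k\cdots A_N})$, and $s=\mathscr{C}_q^2(\rho_{A_1|A_2\cdots A_N})$. Theorem~\ref{theorem 3} gives $s\geq a_2+\cdots+a_{k-1}+b$. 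Since $t\mapsto t^{\alpha/2}$ is monotonically increasing on $[0,\infty)$ for $\alpha\geq2$, we get $s^{\alpha/2}\geq\big(a_2+\cdots+a_{k-1}+b\big)^{\alpha/2}$.

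Next I would invoke the elementary inequality that for nonnegative numbers $x_1,\dots,x_m$ and any exponent $r\geq1$, $\big(\sum_j x_j\big)^r\geq\sum_j x_j^{\,r}$. This is proved by induction on $m$ from the two-term case $(x+y)^r\geq x^r+y^r$, which in turn follows from the fact that $f(x)=(x+y)^r-x^r-y^r$ satisfies $f(0)=0$ and $f'(x)=r[(x+y)^{r-1}-x^{r-1}]\geq0$. Applying this with $r=\alpha/2$ and the $m=k-1$ terms $a_2,\dots,a_{k-1},b$ yields
\begin{equation*}
\begin{aligned}
\big(a_2+\cdots+a_{k-1}+b\big)^{\alpha/2}\geq a_2^{\alpha/2}+\cdots+a_{k-1}^{\alpha/2}+b^{\alpha/2}.
\end{aligned}
\end{equation*}
Chaining the two displayed estimates and substituting back the definitions of $a_i$, $b$, $s$ gives exactly inequality~(\ref{th5}), since $\big(\mathscr{C}_q^2\big)^{\alpha/2}=\mathscr{C}_q^{\alpha}$.

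There is essentially no obstacle here: the corollary is a formal consequence of Theorem~\ref{theorem 3} together with two completely standard real-analysis facts (monotonicity of $t\mapsto t^{\alpha/2}$ and superadditivity of $t\mapsto t^{r}$ for $r\geq1$ on the nonnegative half-line), and I would only need to be mildly careful to state the range $\alpha\geq2$ so that $\alpha/2\geq1$ throughout. The one point worth a sentence of care is that the hierarchical relation being iterated already carries the constraint $1<q\leq2$ from Theorem~\ref{theorem 3}, so that hypothesis is inherited verbatim and nothing new about $q$ needs to be checked. If one wished to be fully self-contained, a one-line proof of the two-term power inequality via the derivative argument above can be included; otherwise it may simply be cited as a well-known fact used repeatedly in the monogamy literature.
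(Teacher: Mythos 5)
Your proof is correct, and it is organized differently from the paper's. The paper does not pass through Theorem~\ref{theorem 3} at all: it returns to the tripartite inequality of Theorem~\ref{theorem 2}, assumes without loss of generality $\mathscr{C}_q^{2}(\rho_{A_1A_2})\geq\mathscr{C}_q^{2}(\rho_{A_1\mathbb{C}})$, factors out the larger term and applies $(1+x)^{\mu}\geq 1+x^{\mu}$ for $0\leq x\leq1$, $\mu=\alpha/2\geq1$, to get the two-term $\alpha$-power inequality $\mathscr{C}_q^{\alpha}(\rho_{A_1|A_2\mathbb{C}})\geq\mathscr{C}_q^{\alpha}(\rho_{A_1A_2})+\mathscr{C}_q^{\alpha}(\rho_{A_1\mathbb{C}})$, and then rebuilds the hierarchy by successively splitting the last subsystem and iterating this two-term estimate. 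You instead take the already-proved hierarchical squared relation of Theorem~\ref{theorem 3} and apply, in a single step, the monotonicity of $t\mapsto t^{\alpha/2}$ together with multi-term superadditivity $\bigl(\sum_j x_j\bigr)^{r}\geq\sum_j x_j^{r}$ for $r\geq1$. The underlying analytic fact is the same in both proofs (the paper's $(1+x)^{\mu}\geq1+x^{\mu}$ is just the normalized two-term case of your superadditivity), but your arrangement is more economical: no ordering assumption is needed (and hence no implicit division by $\mathscr{C}_q^{2}(\rho_{A_1A_2})$, which in the paper's form would require a trivial side remark when that quantity vanishes), and no re-iteration over subsystem splittings is required, since Theorem~\ref{theorem 3} already encodes the hierarchy. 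The paper's route, on the other hand, keeps the corollary structurally parallel to the derivation of Theorem~\ref{theorem 3} itself and yields the two-term $\alpha$-power monogamy inequality as an intermediate statement of independent interest. Both arguments are valid for all $\alpha\geq2$ and inherit $1<q\leq2$ from the theorems they invoke.
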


\begin{proof}
We first analyze a quantum state $\rho_{A_1A_2\mathbb{C}}$ in $2\otimes2\otimes2^{N-2}$ systems. From Theorem \ref{theorem 2}, we have $\mathscr{C}_q^2(\rho_{A_1|A_2\mathbb{C}})\geq\mathscr{C}_q^2(\rho_{A_1A_2})+\mathscr{C}_q^2(\rho_{A_1\mathbb{C}})$. Without loss of generality, we suppose that $\mathscr{C}_q^2(\rho_{A_1A_2})\geq\mathscr{C}_q^2(\rho_{A_1\mathbb{C}})$, then there is
\begin{equation}\label{th51}
\begin{aligned}
\mathscr{C}_q^{\alpha}(\rho_{A_1|A_2\mathbb{C}})\geq&[\mathscr{C}_q^{2}(\rho_{A_1A_2})+\mathscr{C}_q^{2}(\rho_{A_1\mathbb{C}})]^{\frac{\alpha}{2}}\\
=&\mathscr{C}_q^{\alpha}(\rho_{A_1A_2})\Big[1+\frac{\mathscr{C}_q^{2}(\rho_{A_1\mathbb{C}})}{\mathscr{C}_q^{2}(\rho_{A_1A_2})}\Big]^{\frac{\alpha}{2}}\\
\geq&\mathscr{C}_q^{\alpha}(\rho_{A_1A_2})\Big[1+\frac{\mathscr{C}_q^{\alpha}(\rho_{A_1\mathbb{C}})}{\mathscr{C}_q^{\alpha}(\rho_{A_1A_2})}\Big]\\
=&\mathscr{C}_q^{\alpha}(\rho_{A_1A_2})+\mathscr{C}_q^{\alpha}(\rho_{A_1\mathbb{C}}).\\
\end{aligned}
\end{equation}
Here the second inequality is obtained according to the relation $(1+x)^\mu\geq1+x^\mu$, $0\leq x\leq1$, $\mu\geq1$.

Then, the last subsystem $\mathbb{C}=A_3A_4\cdots A_n$ is sequentially split and the formula (\ref{th51}) is iteratively applied, which leads to the validity of inequality (\ref{th5}).
\end{proof}

\section{Multipartite entanglement indicators based on S$G_q$C}\label{IV}
For any $N$-partite pure state $|\phi\rangle_{N}$, based on the monogamy relation (\ref{th40}), we construct a class of entanglement hierarchical indicators,
\begin{equation}
\begin{aligned}
\tau_{qk}(|\phi\rangle_{N})=&\mathscr{C}_q^2(|\phi\rangle_{A_1|A_{2}\cdots A_N})-\sum_{i=2}^{k-1}\mathscr{C}_q^2(\rho_{A_1A_{i}})\\
&-\mathscr{C}_q^2(\rho_{A_1|A_k\cdots A_N}).\\
\end{aligned}
\end{equation}
Furthermore, for any $N$-partite mixed state $\rho_N$, two kinds of entanglement hierarchical indicators are established,
\begin{equation}
\begin{aligned}
&\tau_{qk}^{1}{(\rho_{N})}=\min\limits_{\{p_i,|\phi_i\rangle_N\}}\sum_ip_i\tau_{qk}(|\phi_i\rangle_{N}),\\
&\tau_{qk}^{2}{(\rho_{N})}=\mathscr{C}_q^2(\rho_{A_1|A_{2}\cdots A_N})-\sum_{i=2}^{k-1}\mathscr{C}_q^2(\rho_{A_1A_{i}})\\
&~~~~~~~~~~~~~-\mathscr{C}_q^2(\rho_{A_1|A_k\cdots A_N}).\\
\end{aligned}
\end{equation}
Here $1<q<2$, $k=3,4,\cdots,N$, $\tau_{qk}^{1}{(\rho_{N})}$ is defined by convex-roof extension, and $\tau_{qk}^{2}{(\rho_{N})}$ is constructed by means of the monogamy inequality of mixed states. When $k=N$, $\tau_{qk}^{1}{(\rho_{N})}$ is accorded with the entanglement indicator given in Ref. \cite{arXiv:2406}.


Next, we render an example to illustrate the application of entanglement indicator.

\begin{example}
Consider an $N$-qubit $W$ state $|W_N\rangle=\frac{1}{\sqrt{N}}(|10\cdots0\rangle+|01\cdots0\rangle+\cdots+|00\cdots1\rangle)$, then one has that the hierarchical entanglement indicator based on SC is $\tau(|W_{N}\rangle)=C^2(|W\rangle_{A_1|A_2\cdots A_N})-(k-2)C^2(\rho_{A_1A_{2}})-C^2(\rho_{A_1|A_{k}\cdots A_N})$=0, while the entanglement indicator $\tau_{qk}(|W_{N}\rangle)=\mathscr{C}_q^2(|W\rangle_{A_1|A_2\cdots A_N})-(k-2)\mathscr{C}_q^2(\rho_{A_1A_{2}})-\mathscr{C}_q^2(\rho_{A_1|A_{k}\cdots A_N})$, where
$\mathscr{C}_q^2(|W\rangle_{A_1|A_2\cdots A_N})=\Big[1-\Big(\frac{1+\sqrt{1-\frac{4(N-1)}{N^2}}}{2}\Big)^q
-\Big(\frac{1-\sqrt{1-\frac{4(N-1)}{N^2}}}{2}\Big)^q\Big]^\frac{2}{q}, \mathscr{C}_q^2(\rho_{A_1A_{2}})=\Big[1-\Big(\frac{1+\sqrt{1-\frac{4}{N^2}}}{2}\Big)^q
-\Big(\frac{1-\sqrt{1-\frac{4}{N^2}}}{2}\Big)^q\Big]^\frac{2}{q}, \mathscr{C}_q^2(\rho_{A_1|A_{k}\cdots A_N})=\Big[1-\Big(\frac{1+\sqrt{1-\frac{4(N-k+1)}{N^2}}}{2}\Big)^q
-\Big(\frac{1-\sqrt{1-\frac{4(N-k+1)}{N^2}}}{2}\Big)^q\Big]^\frac{2}{q}$. Let $N=8$, the parameter $k$ ranges from $3$ to $8$, and $q$ is taken as $1.3$, $1.4$, $1.5$, $1.6$, and $1.7$, respectively. The Table \ref{t1} shows $\tau_{qk}(|W_{N}\rangle)>0$, which indicates $\tau_{qk}$ can effectively detect the entanglement of $W$ state.
\end{example}

This result states that the monogamy property of S$G_q$C is not equivalent to that of SC when $1<q<2$.
\section{The comparison of the monogamy property between S$G_q$C and SC in multilevel systems}\label{V}
From the proceeding discussion, we can see that both S$G_q$C and SC satisfy the monogamy relation in the $N$-qubit system. This naturally raises the question of whether the monogamy property of S$G_q$C is applicable to multilevel systems, and whether S$G_q$C possesses superior monogamy property compared to SC?

We first discuss the scenario of $2\otimes d_2\otimes d_3\otimes\cdots\otimes d_N$ systems, wherein the first subsystem is two-level and the remaining subsystems are multilevel. The following result can be obtained.

\begin{theorem}
For any quantum state in $2\otimes d_2\otimes d_3\otimes\cdots\otimes d_N$ systems, the monogamy property of squared $G_q$-concurrence exceeds that of squared concurrence.
\end{theorem}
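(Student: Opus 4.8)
The plan is to collapse the entire statement onto the single-variable function $h_q^2$. Since the first party $A_1$ is a qubit, every marginal entering a monogamy inequality anchored at $A_1$ is a $2\otimes d$ state: grouping $A_2\cdots A_N$ into one party of dimension $d_2d_3\cdots d_N$, the bipartite state $\rho_{A_1|A_2\cdots A_N}$ is $2\otimes d$, and so is each two-body reduction $\rho_{A_1A_i}$. Hence Theorem \ref{theorem 1} applies verbatim to all of them, giving $\mathscr{C}_q^2(\rho_{A_1|A_2\cdots A_N})=h_q^2[C^2(\rho_{A_1|A_2\cdots A_N})]$ and $\mathscr{C}_q^2(\rho_{A_1A_i})=h_q^2[C^2(\rho_{A_1A_i})]$. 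So the S$G_q$C monogamy inequality becomes exactly the assertion that $h_q^2$ of the global squared concurrence dominates the sum of $h_q^2$ of the two-body squared concurrences.

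I would then isolate the two facts about $h_q^2$ that do the work. By Lemma \ref{lemma 2} it is monotonically increasing and convex on $[0,1]$, and substituting $t=0$ into (\ref{P1}) gives $h_q(0)=0$. A convex function vanishing at the origin is superadditive, so for nonnegative $t_2,\dots,t_N$ with $\sum_i t_i\le 1$ one has $h_q^2\big(\sum_i t_i\big)\ge\sum_i h_q^2(t_i)$; for $1<q<2$ this is strict whenever at least two of the $t_i$ are positive, since $h_q^2$ is then strictly convex (it degenerates to the linear map $t\mapsto t/2$ only at $q=2$). Now assume the given state obeys SC monogamy, $C^2(\rho_{A_1|A_2\cdots A_N})\ge\sum_{i=2}^N C^2(\rho_{A_1A_i})$. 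Monotonicity of $h_q^2$ followed by superadditivity yields
\begin{equation*}
\mathscr{C}_q^2(\rho_{A_1|A_2\cdots A_N})=h_q^2[C^2(\rho_{A_1|A_2\cdots A_N})]\ge h_q^2\Big[\sum_{i=2}^N C^2(\rho_{A_1A_i})\Big]\ge\sum_{i=2}^N h_q^2[C^2(\rho_{A_1A_i})]=\sum_{i=2}^N\mathscr{C}_q^2(\rho_{A_1A_i}),
\end{equation*}
so S$G_q$C satisfies its monogamy relation whenever SC does; a second use of superadditivity even gives the quantitative comparison $\mathscr{C}_q^2(\rho_{A_1|A_2\cdots A_N})-\sum_i\mathscr{C}_q^2(\rho_{A_1A_i})\ge h_q^2\big(C^2(\rho_{A_1|A_2\cdots A_N})-\sum_i C^2(\rho_{A_1A_i})\big)$, exhibiting the S$G_q$C monogamy residual as at least a monotone image of the SC residual.

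The remaining point --- and the only delicate one --- is to show the improvement is genuine, i.e.\ that the class of states obeying the S$G_q$C monogamy relation is strictly larger than the one for SC. Strict convexity of $h_q^2$ for $1<q<2$ is the lever: when the two-body squared concurrences are spread over several parties, $\sum_i h_q^2[C^2(\rho_{A_1A_i})]$ lies strictly below $h_q^2\big[\sum_i C^2(\rho_{A_1A_i})\big]$, leaving a gap able to absorb a small SC monogamy deficit while keeping $h_q^2[C^2(\rho_{A_1|A_2\cdots A_N})]$ above $\sum_i\mathscr{C}_q^2(\rho_{A_1A_i})$. I would make this explicit on concrete families in a $2\otimes d_2\otimes\cdots\otimes d_N$ system --- the examples discussed immediately after this theorem --- exhibiting states whose squared concurrences violate SC monogamy but whose $h_q^2$-images do not, and checking the sign of the S$G_q$C residual directly. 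This example-based half is where the real care is needed; the inequality chain above is essentially forced once Theorem \ref{theorem 1} and Lemma \ref{lemma 2} are in place.
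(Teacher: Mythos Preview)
Your argument is correct and in fact tighter than the paper's. Both proofs rest on Theorem~\ref{theorem 1} and Lemma~\ref{lemma 2}, but the paper organizes the pure-state case through the slope decomposition
\[
\mathscr{C}_q^2(|\phi\rangle_{A_1|A_2\cdots A_N})-\sum_{i}\mathscr{C}_q^2(\rho_{A_1A_i})
=k_{q1}\Big[C^2(|\phi\rangle_{A_1|A_2\cdots A_N})-\sum_i C^2(\rho_{A_1A_i})\Big]+\Delta_1,
\]
with $k_{qi}=h_q^2(t_i)/t_i$, and then handles mixed states by a separate and rather involved argument on optimal decompositions (expanding squared sums, controlling cross terms via a Cauchy--Schwarz-type estimate). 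Your key simplification is noticing that $\rho_{A_1|A_2\cdots A_N}$, viewed across the cut, is already a $2\otimes d$ state, so Theorem~\ref{theorem 1} applies to it directly even when $\rho$ is mixed; this renders the paper's entire mixed-state detour unnecessary and collapses the whole proof to monotonicity of $h_q^2$ followed by the superadditivity $h_q^2(\sum_i t_i)\ge\sum_i h_q^2(t_i)$ (convexity plus $h_q^2(0)=0$). The paper's slope comparison $k_{q1}\ge k_{qi}$ and your superadditivity are two repackagings of the same convexity fact; yours is the shorter route and yields the bonus quantitative bound on the residual that the paper does not state. For the strict-containment half both you and the paper ultimately appeal to explicit examples, which is appropriate since the claim that S$G_q$C monogamy can survive an SC violation is existential.
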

\begin{table}[htbp]
	\newcommand{\tabincell}[2]{\begin{tabular}{@{}#1@{}}#2\end{tabular}}
	\centering
	\caption{\label{t1} For the 8-qubit $W$ state, the values of entanglement indicator $\tau_{qk}(|W_{8}\rangle)$ are presented for $k=3,4,5,6,7,8$ and different parameters $q$.}
	\begin{tabular}{cccccc}
		\toprule
		\hline
		\hline
		\specialrule{0em}{1pt}{1pt}
        ~~&~~~~$q$=1.3~~  & $~~~q$=1.4~~ & $~~~q$=1.5~~  & $~~~q$=1.6~~ & $~~q$=1.7\\
		
		\specialrule{0em}{1pt}{1pt}
		\midrule
		\hline
		\specialrule{0em}{1.5pt}{1.5pt}
		\tabincell{c}{~~$k=3$}&\tabincell{c}{~~0.0031}&\tabincell{c}{~~0.0048}&\tabincell{c}{~~0.0063}&\tabincell{c}{~~0.0070}&\tabincell{c}{~~0.0069}\\
		\specialrule{0em}{1.5pt}{1.5pt}
		\tabincell{c}{~~$k=4$}&\tabincell{c}{~~0.0076}&\tabincell{c}{~~0.0128}&\tabincell{c}{~~0.0181}&\tabincell{c}{~~0.0230}&\tabincell{c}{~~0.0270} \\
		\specialrule{0em}{1.5pt}{1.5pt}
		\tabincell{c}{~~$k=5$}&\tabincell{c}{~~0.0119}&\tabincell{c}{~~0.0203}&\tabincell{c}{~~0.0295}&\tabincell{c}{~~0.0385}&\tabincell{c}{~~0.0466} \\
		\specialrule{0em}{1.5pt}{1.5pt}
		\tabincell{c}{~~$k=6$}&\tabincell{c}{~~0.0158}&\tabincell{c}{~~0.0274}&\tabincell{c}{~~0.0402}&\tabincell{c}{~~0.0532}&\tabincell{c}{~~0.0656} \\
		\specialrule{0em}{1.5pt}{1.5pt}
		\tabincell{c}{~~$k=7$}&\tabincell{c}{~~0.0193}&\tabincell{c}{~~0.0337}&\tabincell{c}{~~0.0501}&\tabincell{c}{~~0.0671}&\tabincell{c}{~~0.0837}\\
		\specialrule{0em}{1.5pt}{1.5pt}
		\tabincell{c}{~~$k=8$}&\tabincell{c}{~~0.0222}&\tabincell{c}{~~0.0391}&\tabincell{c}{~~0.0587}&\tabincell{c}{~~0.0796}&\tabincell{c}{~~0.1005} \\
		\bottomrule
		\specialrule{0em}{1.5pt}{1.5pt}
		\hline
		\hline
	\end{tabular}
\end{table}

\begin{proof}
Given an arbitrary pure state $|\phi\rangle_{A_1A_2\cdots A_N}$ in $2\otimes d_2\otimes d_3\otimes\cdots\otimes d_N$ systems, the entanglement distribution has the following relation
\begin{equation*}
\begin{aligned}
&\mathscr{C}_q^2(|\phi\rangle_{A_1|A_{2}\cdots A_N})-\sum\limits_{i=2}^{N}\mathscr{C}_q^2(\rho_{A_1A_{i}})\\
=&h_q^2[C^2(|\phi\rangle_{A_1|A_{2}\cdots A_N})]-\sum\limits_{i=2}^{N}h_q^2[C^2(\rho_{A_1A_{i}})]\\
=&k_{q1}C^2(|\phi\rangle_{A_1|A_{2}\cdots A_N})-\sum\limits_{i=2}^{N}k_{qi}C^2(\rho_{A_1A_{i}})\\
=&k_{q1}[C^2(|\phi\rangle_{A_1|A_{2}\cdots A_N})-\sum\limits_{i=2}^{N}C^2(\rho_{A_1A_{i}})]+\Delta_1,\\
\end{aligned}
\end{equation*}
where $k_{q1}=\frac{h_q^2[C^2(|\phi\rangle_{A_1|A_{2}\cdots A_N})]}{C^2(|\phi\rangle_{A_1|A_{2}\cdots A_N})}$, $k_{qi}=\frac{h_q^2[C^2(\rho_{A_1A_{i}})]}{C^2(\rho_{A_1A_{i}})}$, $i=2,3,\cdots,N$, $\Delta_1=\sum_{i=2}^N(k_{q1}-k_{qi})C^2(\rho_{A_1A_{i}})$.

If the SC obeys the monogamy relation for any $2\otimes d_2\otimes d_3\otimes\cdots\otimes d_N$ pure state, then there is
\begin{equation*}
\begin{aligned}
\Delta_2=k_{q1}[C^2(|\phi\rangle_{A_1|A_{2}\cdots A_N})-\sum\limits_{i=2}^{N}C^2(\rho_{A_1A_{i}})]\geq0.\\
\end{aligned}
\end{equation*}
In addition, the relation $k_{q1}\geq k_{qi}$  holds because $h_q^2(C^2)$ is a monotonically increasing and convex function with respect to $C^2$, which leads to $\Delta_1\geq0$. Therefore, we obtain
\begin{equation*}
\begin{aligned}
\mathscr{C}_q^2(|\phi\rangle_{A_1|A_{2}\cdots A_N})-\sum\limits_{i=2}^{N}\mathscr{C}_q^2(\rho_{A_1A_{i}})
=\Delta_1+\Delta_2\geq0.\\
\end{aligned}
\end{equation*}

For any mixed state $\rho_{A_1A_{2}\cdots A_N}$, we suppose that $\{p_i,|\psi_i\rangle_{A_1|A_{2}\cdots A_N}\}$ is the optimal pure state decomposition of $\mathscr{C}_q(\rho_{A_1|A_{2}\cdots A_N})$, then there is
\begin{equation*}
\begin{aligned}
\mathscr{C}_q(\rho_{A_1|A_{2}\cdots A_N})&=\sum_ip_i\mathscr{C}_q(|\psi_i\rangle_{A_1|A_{2}\cdots A_N})=\sum_iC_{1_i},\\
\mathscr{C}_q^\prime(\rho_{A_1A_{j}})&=\sum_ip_i\mathscr{C}_q(\rho_{A_1A_j}^i)=\sum_iC_{j_i},\\
\end{aligned}
\end{equation*}
where $\mathscr{C}_q^\prime(\rho_{A_1A_{j}})$ is the average $G_q$-concurrence under some decomposition, $j=2,3,\cdots,N$. Then, we can obtain
\begin{equation}\label{th24}
\begin{aligned}
&\mathscr{C}_q^2(\rho_{A_1|A_{2}\cdots A_N})-\sum_{j=2}^{N}\mathscr{C}_q^2(\rho_{A_1A_{j}})\\
\geq&\Big(\sum_iC_{1_i}\Big)^2-\sum_{j=2}^{N}\Big(\sum_iC_{j_i}\Big)^2\\
=&\sum_i(C_{1_i}^2-\sum_jC_{j_i}^2)+\Delta
\end{aligned}
\end{equation}
where the inequality holds since $\mathscr{C}_q(\rho_{A_1A_{j}})\leq\sum_iC_{j_i}$, the first term in the equality is nonnegative according to the monogamy property of pure state, $\Delta=2\sum_i\sum_{k=i+1}(C_{1_i}C_{1_k}-\sum_{j=2}^{N}C_{j_i}C_{j_k})$. The following we derive $\Delta$ is also greater than or equal to zero. For any pure state $|\psi_i\rangle$ and $|\psi_k\rangle$ chosen from the optimal pure decomposition of $\mathscr{C}_q(\rho_{A_1|A_{2}\cdots A_N})$, one has
\begin{widetext}
\begin{equation*}
\begin{aligned}
C_{1_i}^{2}C_{1_k}^{2}\geq&(C_{2_i}^{2}+C_{3_i}^{2}+\cdots+C_{N_i}^{2})(C_{2_k}^{2}+C_{3_k}^{2}+\cdots+C_{N_k}^{2})\\
=&(C_{2_i}C_{2_k})^2+(C_{2_i}C_{3_k})^2+\cdots+(C_{2_i}C_{N_k})^2
+(C_{3_i}C_{2_k})^2+(C_{3_i}C_{3_k})^2+\cdots+(C_{3_i}C_{N_k})^2\\
&+\cdots+(C_{N_i}C_{2_k})^2+(C_{N_i}C_{3_k})^2+\cdots+(C_{N_i}C_{N_k})^2\\
\geq&(C_{2_i}C_{2_k})^2+(C_{3_i}C_{3_k})^2+\cdots+(C_{N_i}C_{N_k})^2
+2(C_{2_i}C_{2_k})(C_{3_i}C_{3_k})+\cdots+2(C_{2_i}C_{2_k})(C_{N_i}C_{N_k})\\
&+2(C_{3_i}C_{3_k})(C_{4_i}C_{4_k})+\cdots+2(C_{3_i}C_{3_k})(C_{N_i}C_{N_k})
+\cdots+2(C_{{(N-1)}_i}C_{{(N-1)}_k})(C_{N_i}C_{N_k})\\
=&\Big(\sum_{j=2}^{N}C_{j_i}C_{j_k}\Big)^2,
\end{aligned}
\end{equation*}
\end{widetext}
where the first inequality is valid owing to the monogamy property of pure state, and the second inequality utilizes the triangle inequality. Therefore, there is
\begin{equation*}\label{th25}
\begin{aligned}
C_{1_i}C_{1_k}\geq C_{2_i}C_{2_k}+C_{3_i}C_{3_k}+\cdots+C_{N_i}C_{N_k},
\end{aligned}
\end{equation*}
which indicates that $\Delta\geq0$ holds.

Therefore, we obtain that the S$G_q$C must obeys monogamy relation if this property is satisfied by SC.

Next, we demonstrate that S$G_q$C may still meet the monogamy relation even if SC exhibits polygamy $C^2(\rho_{A_1|A_2\cdots A_N})-\sum_jC^2(\rho_{A_1A_j})\leq0$. In this situation, for any $2\otimes d_2\otimes d_3\otimes\cdots\otimes d_N$ pure state, there is
\begin{equation*}
\begin{aligned}
\Delta_2=k_{q1}[C^2(|\phi\rangle_{A_1|A_{2}\cdots A_N})-\sum\limits_{i=2}^{N}C^2(\rho_{A_1A_{i}})]\leq0,\\
\end{aligned}
\end{equation*}
then one has
\begin{equation*}
\begin{aligned}
\mathscr{C}_q^2(|\phi\rangle_{A_1|A_{2}\cdots A_N})-\sum\limits_{i=2}^{N}\mathscr{C}_q^2(\rho_{A_1A_{i}})
=\Delta_1-|\Delta_2|,
\end{aligned}
\end{equation*}
which obeys monogamy relation when $\Delta_1\geq|\Delta_2|$. When S$G_q$C conforms to the monogamy relation for pure states, then it also satisfies the monogamy relation for mixed states, and the proof for this is similar to that of formula (\ref{th24}).

Based on the above analysis, we conclude that the monogamy relation of SC is a sufficient condition for that of S$G_q$C. Remarkably, S$G_q$C can also satisfy the monogamy relation under certain condition, even in the absence of this property for SC. Consequently, the monogamy property of S$G_q$C is superior to SC.
\end{proof}

To illustrate this result more clearly, an example is rendered as follows.

\begin{example}
Considering a four-partite mixed state $\rho_{A_1A_2A_3A_4}$ in $2\otimes d_2\otimes d_3\otimes d_4$ systems, we assume that the squared bipartite concurrences are $C^2(\rho_{A_1|A_2A_3A_4})=0.76$, $C^2(\rho_{A_1A_2})=C^2(\rho_{A_1A_3})=C^2(\rho_{A_1A_4})=0.27$, then one sees
$C^2(\rho_{A_1|A_2A_3A_4})-C^2(\rho_{A_1A_2})-C^2(\rho_{A_1A_3})-C^2(\rho_{A_1A_4})=-0.05$, whereas the entanglement indicator generated by S$G_q$C is valid for this state since $\tau_{q4}^2(\rho_{A_1|A_2A_3A_4})=\mathscr{C}_{q}^2(\rho_{A_1|A_2A_3A_4})-\mathscr{C}_{q}^2(\rho_{A_1A_2})-\mathscr{C}_{q}^2(\rho_{A_1A_3})-\mathscr{C}_{q}^2(\rho_{A_1A_4})
=0.0229$ when the parameter $q$ is chosen as $\frac{3}{2}$. In this case, S$G_q$C satisfies the monogamy relation, while SC does not.
\end{example}

\begin{figure}[htbp]
\centering
{\includegraphics[width=8cm,height=6cm]{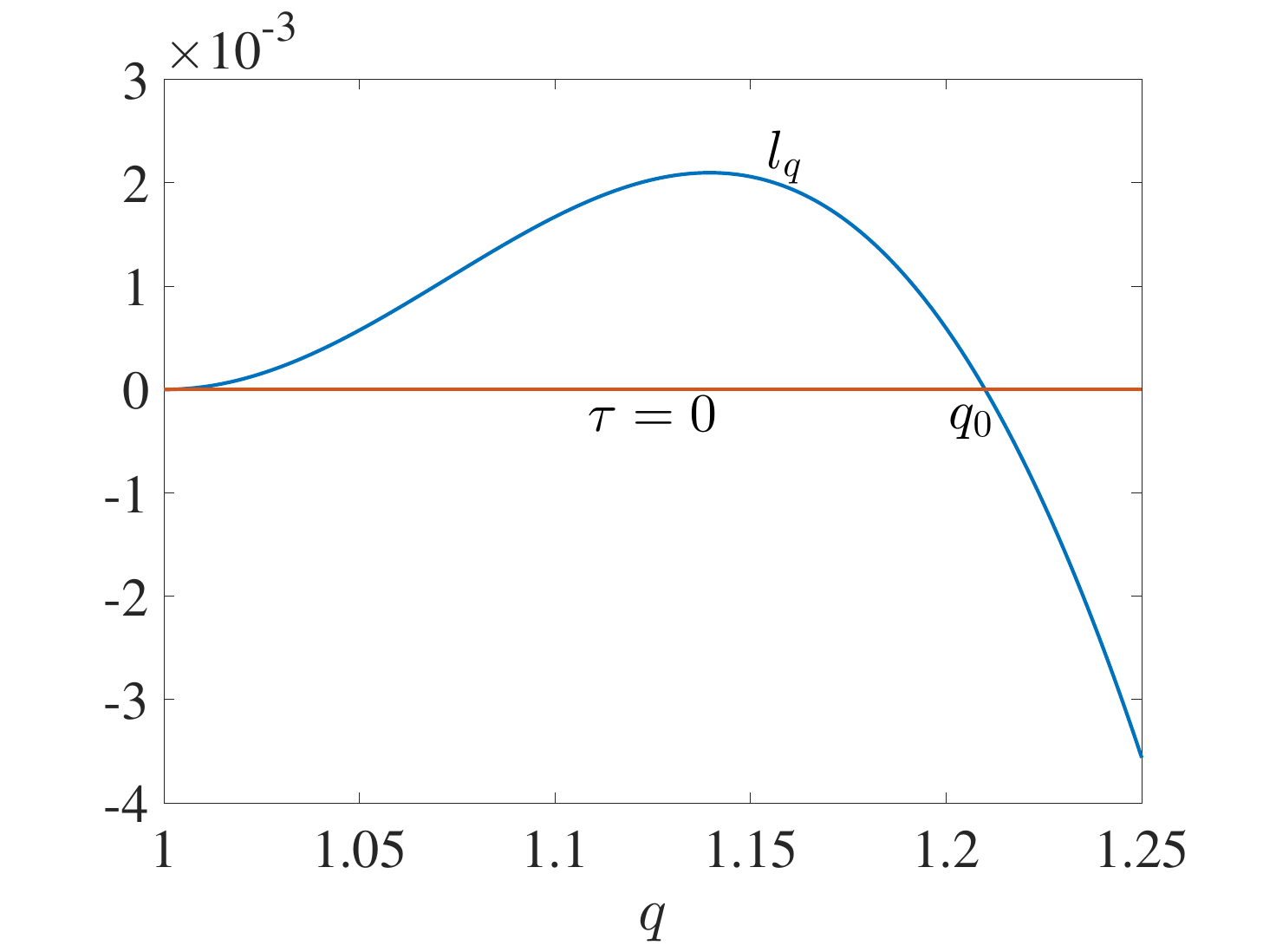}}
\caption{The blue line $l_q$ is a lower bound of $\tau_{q3}(|\psi\rangle_{A|BC})$, the red line denotes $\tau(|\psi\rangle_{A|BC})=0$, the intersection point of these two lines is denoted as $q_0$.\\}\label{fig 5}
\end{figure}
We now turn to the monogamy properties of S$G_q$C and SC within a multipartite quantum system where the first subsystem is multilevel.

\begin{example}
Consider a $3\otimes3\otimes3$ quantum state $|\psi\rangle_{ABC}=\frac{1}{\sqrt{6}}(|123\rangle-|132\rangle+|231\rangle-|213\rangle+|312\rangle-|321\rangle)$, Ou pointed out that based on SC, the state $|\psi\rangle_{ABC}$ does not satisfy the monogamy property since $C^2(|\psi\rangle_{A|BC})-C^2(\rho_{AB})-C^2(\rho_{AC})=-\frac{2}{3}<0$ \cite{PRA75.034305}. However, based on S$G_q$C, we have $\tau_{q3}(|\psi\rangle_{A|BC})=\mathscr{C}_{q}^2(|\psi\rangle_{A|BC})-\mathscr{C}_{q}^2(\rho_{AB})-\mathscr{C}_{q}^2(\rho_{AC})\geq[1-(\frac{1}{3})^{q-1}]^{\frac{2}{q}}-2\times[1-2\times(\frac{1}{2})^q]^{\frac{2}{q}}=l_q$.
From Fig. \ref{fig 5}, we can see that S$G_q$C must obey monogamy relation for $q\in(1,q_0)$.
\end{example}

\begin{example}
Consider a $4\otimes2\otimes2$ pure state $|\varphi\rangle_{ABC}=\frac{1}{\sqrt{2}}(a|000\rangle+b|110\rangle+a|201\rangle+b|311\rangle)$ with $a=\sin\theta, b=\cos\theta$ \cite{PRA90.062343}. The reduced density operator for subsystem $AB$ is $\rho_{AB}=\frac{1}{2}|\phi_1\rangle\langle\phi_1|+\frac{1}{2}|\phi_2\rangle\langle\phi_2|$, where $|\phi_1\rangle=a|00\rangle+b|11\rangle$ and $|\phi_2\rangle=a|20\rangle+b|31\rangle$. For any pure state decomposition of $\rho_{AB}$, the form of each pure state is
\begin{equation}
\begin{aligned}
|\widetilde{\phi}_i\rangle=a_i|\phi_1\rangle+{\rm e}^{-{\rm i}\gamma}\sqrt{1-a_i^2}|\phi_2\rangle,
\end{aligned}
\end{equation}
for which the reduced density operator with respect to subsystem $B$ is given by $\rho_B^i={\rm diag}\{a^2,b^2\}$. Thereupon, we obtain $C^2(\rho_{AB})=4a^2b^2$ and $\mathscr{C}_q^2(\rho_{AB})=(1-a^{2q}-b^{2q})^{\frac{2}{q}}$. Analogously, for the reduced density operator $\rho_{AC}$, one has $C^2(\rho_{AC})=1$ and $\mathscr{C}_q^2(\rho_{AC})=[1-(\frac{1}{2})^{q-1}]^{\frac{2}{q}}$.
Additionally, the SC and S$G_q$C of $|\varphi\rangle_{ABC}$ are respectively
\begin{figure}[htbp]
\centering
{\includegraphics[width=8cm,height=6cm]{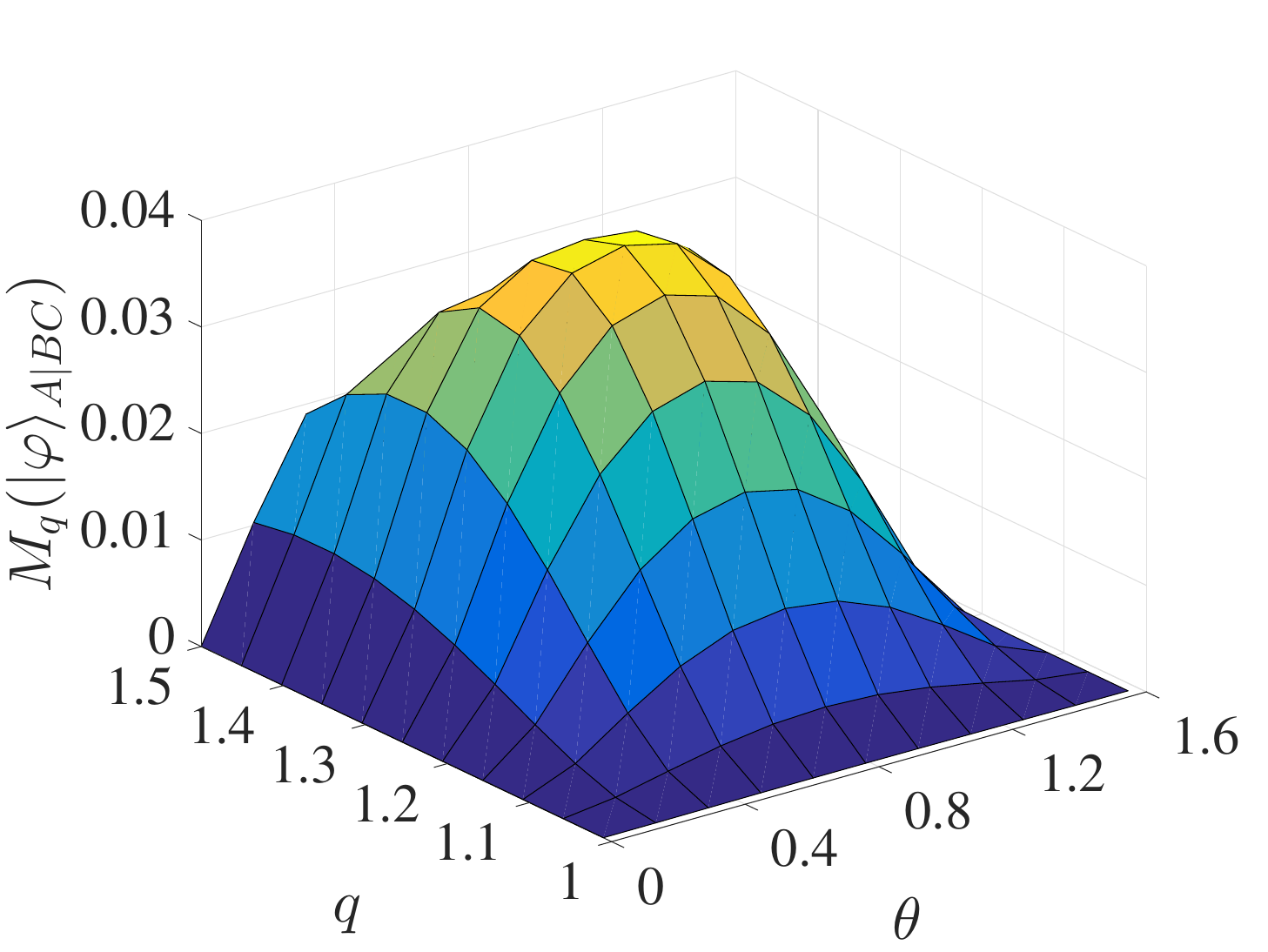}}
\caption{The entanglement distribution $M_q(|\varphi\rangle_{A|BC})=\mathscr{C}_q^2(|\varphi\rangle_{A|BC})-\mathscr{C}_q^2(\rho_{AB})-\mathscr{C}_q^2(\rho_{AC})$ as a function of $\theta~(0\leq\theta\leq\frac{\pi}{2})$ and $q~(1\leq q\leq1.5)$ is nonnegative, which indicates S$G_q$C is monogamous. \\}\label{fig e4}
\end{figure}
\begin{equation}
\begin{aligned}
&C^2(|\varphi\rangle_{A|BC})=2-a^4-b^4,\\
&\mathscr{C}_q^2(|\varphi\rangle_{A|BC})=\Big(1-\frac{a^{2q}}{2^{q-1}}-\frac{b^{2q}}{2^{q-1}}\Big)^{\frac{2}{q}},\\
\end{aligned}
\end{equation}
under the bipartition $A|BC$. Then, one has
\begin{equation*}
\begin{aligned}
M(|\varphi\rangle_{A|BC})&=C^2(|\varphi\rangle_{A|BC})-C^2(\rho_{AB})-C^2(\rho_{AC})\\
&=2-a^4-b^4-4a^2b^2-1\\
&=-2a^2b^2\\
&\leq0.\\
\end{aligned}
\end{equation*}
Obviously, SC does not satisfy the monogamy relation for this state. Meanwhile, we have
\begin{equation*}
\begin{aligned}
M_q(|\varphi\rangle_{A|BC})=&\mathscr{C}_q^2(|\varphi\rangle_{A|BC})-\mathscr{C}_q^2(\rho_{AB})-\mathscr{C}_q^2(\rho_{AC})\\
=&\Big[1-\frac{a^{2q}}{2^{q-1}}-\frac{b^{2q}}{2^{q-1}}\Big]^{\frac{2}{q}}-(1-a^{2q}-b^{2q})^{\frac{2}{q}}\\
&-\big[1-\big(\frac{1}{2}\big)^{q-1}\big]^{\frac{2}{q}},\\
\end{aligned}
\end{equation*}
where the parameters $a,b$ are taken as $a=\sin\theta$ and $b=\cos\theta$, respectively. So we can see that the S$G_q$C obeys monogamy property when $\theta\in[0,\frac{\pi}{2}]$ and $q\in[1,1.5]$, as shown in Fig. \ref{fig e4}.
\end{example}

Therefore, these results indicate that the entanglement indicator produced by S$G_q$C is not only applicable to multilevel quantum systems but also exhibits superior performance in detecting entanglement compared to the entanglement indicator generated by SC. Furthermore, our results can more accurately reveal the entanglement structure of a multipartite quantum state that goes beyond bipartite entanglement and better characterize the entanglement distribution of multipartite quantum states.

\section{Conclusion}\label{VI}
In this work, we have proved that S$G_q$C satisfies a set of hierarchical monogamy relations when an $N$-qubit quantum system is divided into $k$ parties, which is an essential extension of previous results. By utilizing these monogamy inequalities, we have constructed two classes of hierarchical entanglement indicators. Compared with the entanglement indicators produced by squared concurrence, they demonstrate notable advantages in the capability of detecting entanglement. Moreover, in $2\otimes d$ systems, we have established the connection between $G_q$-concurrence and concurrence through an analytical function. In addition, rigorous proof has shown that the squared $G_q$-concurrence possesses better monogamy property than squared concurrence in $2\otimes d_2\otimes d_3\otimes\cdots\otimes d_N$ systems. Several concrete examples, where the first subsystem is elaborated both in a qubit scenario and in a multilevel scenario, illustrate that the S$G_q$C fulfills monogamy relation in multilevel systems even if SC fails. These results contribute to characterizing the entanglement distribution of multipartite quantum states.

\section*{ACKNOWLEDGMENTS}
This work was supported by the National Natural Science Foundation of China under Grant No. 62271189, and the Hebei Central Guidance on Local Science and Technology Development Foundation of China under Grant No. 236Z7604G.

\begin{appendix}
\section{Proof of Lemma 1}\label{A}
Prove that the function $h_q(C^2)$ is concave with respect to $C^2$, that is, prove that its second derivative is non-positive for $1<q\leq2$. Set $t=C^2$, then we have
\begin{equation*}
\begin{aligned}
\frac{\partial h_q}{\partial t}=&\frac{1}{2^{q+1}}\bigg[1-\bigg(\frac{1+\sqrt{1-t}}{2}\bigg)^q-\bigg(\frac{1-\sqrt{1-t}}{2}\bigg)^q\bigg]^{\frac{1}{q}-1}\\
&\times\frac{(1+\sqrt{1-t})^{q-1}-(1-\sqrt{1-t})^{q-1}}{\sqrt{1-t}}.\\
\end{aligned}
\end{equation*}
Define
\begin{equation*}
\begin{aligned}
g_q=\frac{\partial^2h_q}{\partial t^2},
\end{aligned}
\end{equation*}
then there is
\begin{equation}\label{le0}
\begin{aligned}
g_q=&\frac{1}{2^{q+1}}\bigg[1-\bigg(\frac{1+\sqrt{1-t}}{2}\bigg)^q-\bigg(\frac{1-\sqrt{1-t}}{2}\bigg)^q\bigg]^{\frac{1}{q}-2}\\
&\times M(t,q),\\
\end{aligned}
\end{equation}
where $M(t,q)=\frac{1-q}{2^{q+1}}\xi_1+\xi_2(\xi_3-\xi_4)$,
\begin{equation}\label{le1}
\begin{aligned}
&\xi_1=\frac{[(1+\sqrt{1-t})^{q-1}-(1-\sqrt{1-t})^{q-1}]^2}{{1-t}},\\
&\xi_2=1-\bigg(\frac{1+\sqrt{1-t}}{2}\bigg)^q-\bigg(\frac{1-\sqrt{1-t}}{2}\bigg)^q,\\
&\xi_3=\frac{(1+\sqrt{1-t})^{q-2}}{2(1-t)}\bigg[\frac{1+\sqrt{1-t}}{\sqrt{1-t}}-(q-1)\bigg],\\
&\xi_4=\frac{(1-\sqrt{1-t})^{q-2}}{2(1-t)}\bigg[\frac{1-\sqrt{1-t}}{\sqrt{1-t}}+(q-1)\bigg].\\
\end{aligned}
\end{equation}
It is noted that the term before $M(t,q)$ in Eq. (\ref{le0}) is positive for $0<t<1$ and $1<q\leq2$, so judging the sign of $g_q$ is equivalent to judging the sign of $M(t,q)$. Thereof, we will analyze the maximal and minimal values of $M(t,q)$ in the region $D=\{(t,q)|0\leq t\leq1,1<q\leq2\}$, which must be attained at critical points (within the interior of $D$) or boundary points of $D$.
However, there is no solution of
\begin{figure}[htbp]
\centering
{\includegraphics[width=8cm,height=6cm]{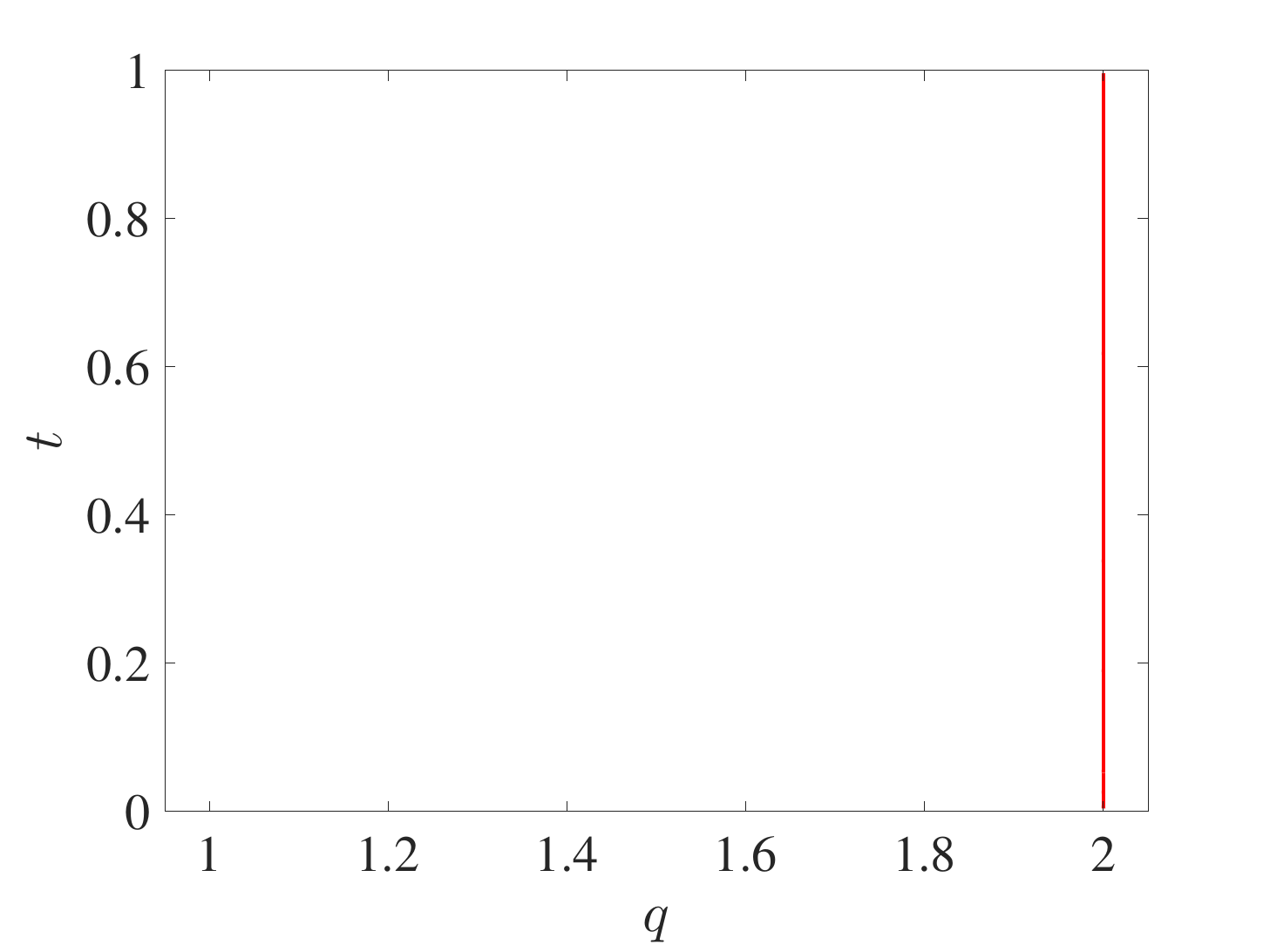}}
\caption{The red line denotes the solutions of $\frac{\partial M(t,q)}{\partial t}=0$.\\}\label{fig 1}
\end{figure}
\begin{figure}[htbp]
\centering
{\includegraphics[width=8cm,height=6cm]{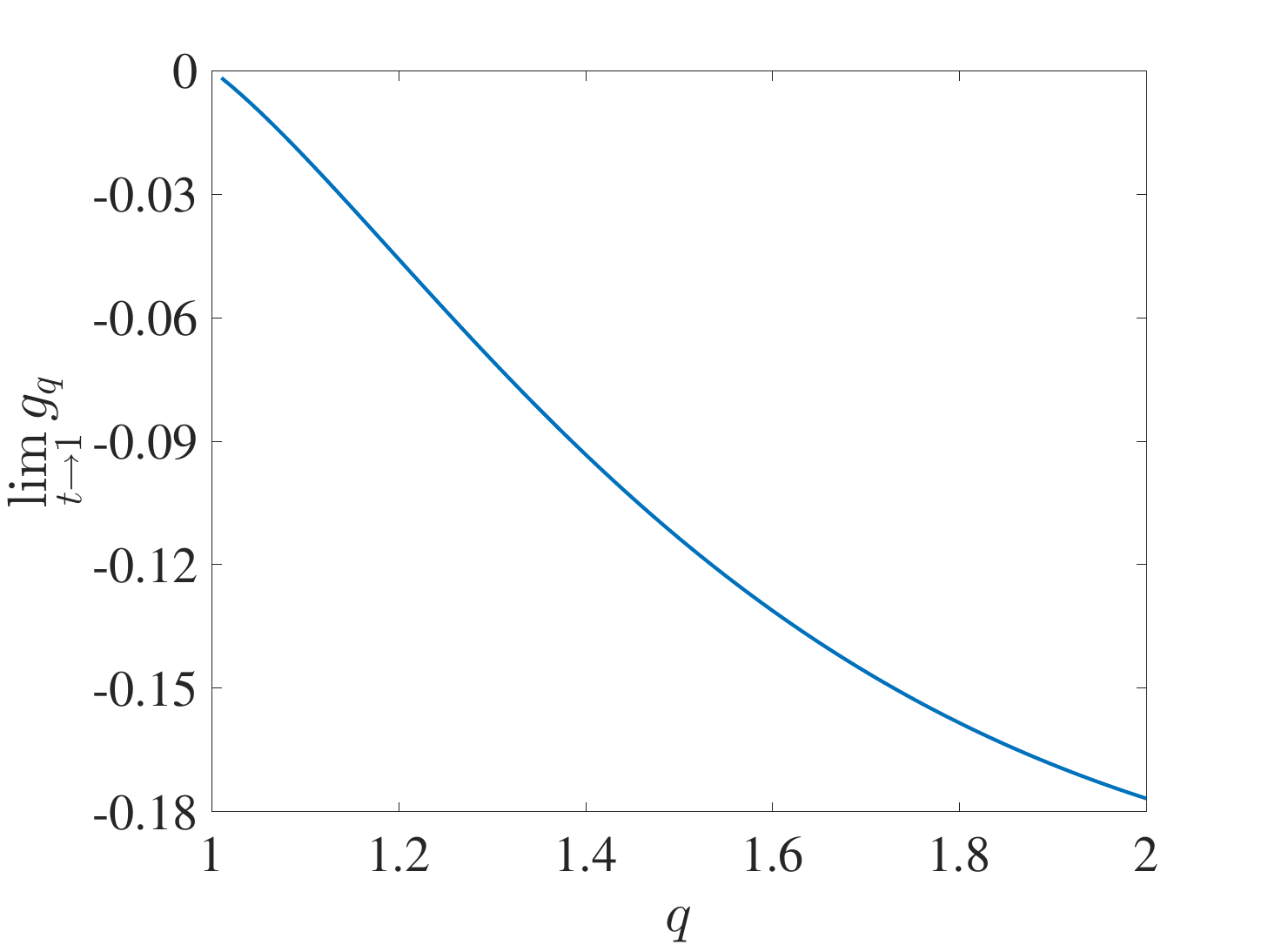}}
\caption{The blue line illustrates that $\lim\limits_{t\rightarrow1}g_q=\lim\limits_{t\rightarrow1}\frac{\partial^2h_q}{\partial t^2}$ is non-positive for $1<q\leq2$.\\}\label{fig 2}
\end{figure}
\begin{equation*}
\begin{aligned}
\nabla M(t,q)=\bigg(\frac{\partial M(t,q)}{\partial t},\frac{\partial M(t,q)}{\partial q}\bigg)=0\\
\end{aligned}
\end{equation*}
in the domain $D^\circ=\{(t,q)|0<t<1,1<q<2\}$, as shown in Fig. \ref{fig 1}, which suggests that there are no critical points within $D^\circ$. Thus, the minimal and maximal values only occur at the boundary points, which are
\begin{equation*}
\begin{aligned}
&\lim\limits_{t\rightarrow0}g_q=-\infty,\\
&\lim\limits_{t\rightarrow1}g_q=-\frac{(q-1)(2^q\cdot q^2-5q\cdot 2^q+6\cdot 2^q+4q^2-2q-6)}{12(2^q-2)^{\frac{2q-1}{q}}},\\
\end{aligned}
\end{equation*}
and the Fig. \ref{fig 2} shows $\lim\limits_{t\rightarrow1}g_q\leq0$ for $1<q\leq 2$. This means that the function $h_q(C^2)$ is concave with respect to $C^2$ for $1<q\leq2$.

\section{Proof of Lemma 2}\label{B}
First, we prove that the function $h_q^2(C^2)$ is monotonically increasing with respect to $C^2$. Set $t=C^2$,
by calculation, we get its first-order partial derivative
\begin{equation*}
\begin{aligned}
\frac{\partial h_q^2}{\partial t}=&\frac{1}{2^{q}}\bigg[1-\bigg(\frac{1+\sqrt{1-t}}{2}\bigg)^q-\bigg(\frac{1-\sqrt{1-t}}{2}\bigg)^q\bigg]^{\frac{2}{q}-1}\\
&\times\frac{(1+\sqrt{1-t})^{q-1}-(1-\sqrt{1-t})^{q-1}}{\sqrt{1-t}}.\\
\end{aligned}
\end{equation*}
Obviously, $\frac{\partial{h}_q^2}{\partial t}$ is greater than 0 for $0<t<1$ and $q>1$, which indicates that ${h}_q^2(C^2)$ is a monotonically increasing function with regard to $C^2$ for $0\leq C^2\leq1$.

Next, we verify that $h_q^2(C^2)$ is a convex function of $C^2$, that is, prove that
\begin{equation*}
\begin{aligned}
\widetilde{g}_q=\frac{\partial^2{h}_q^2}{\partial t^2}\geq0.
\end{aligned}
\end{equation*}
Through derivation, one has
\begin{equation*}
\begin{aligned}
\widetilde{g}_q=&\frac{1}{2^{q}}\bigg[1-\bigg(\frac{1+\sqrt{1-t}}{2}\bigg)^q-\bigg(\frac{1-\sqrt{1-t}}{2}\bigg)^q\bigg]^{\frac{2}{q}-2}\\
&\times M'(t,q),\\
\end{aligned}
\end{equation*}
where $M'(t,q)=\frac{2-q}{2^{q+1}}\xi_1+\xi_2(\xi_3-\xi_4)$, $\xi_1,\xi_2,\xi_3,\xi_4$ are respectively given in Eq. (\ref{le1}).

\begin{figure}[htbp]
\centering
{\includegraphics[width=8cm,height=6cm]{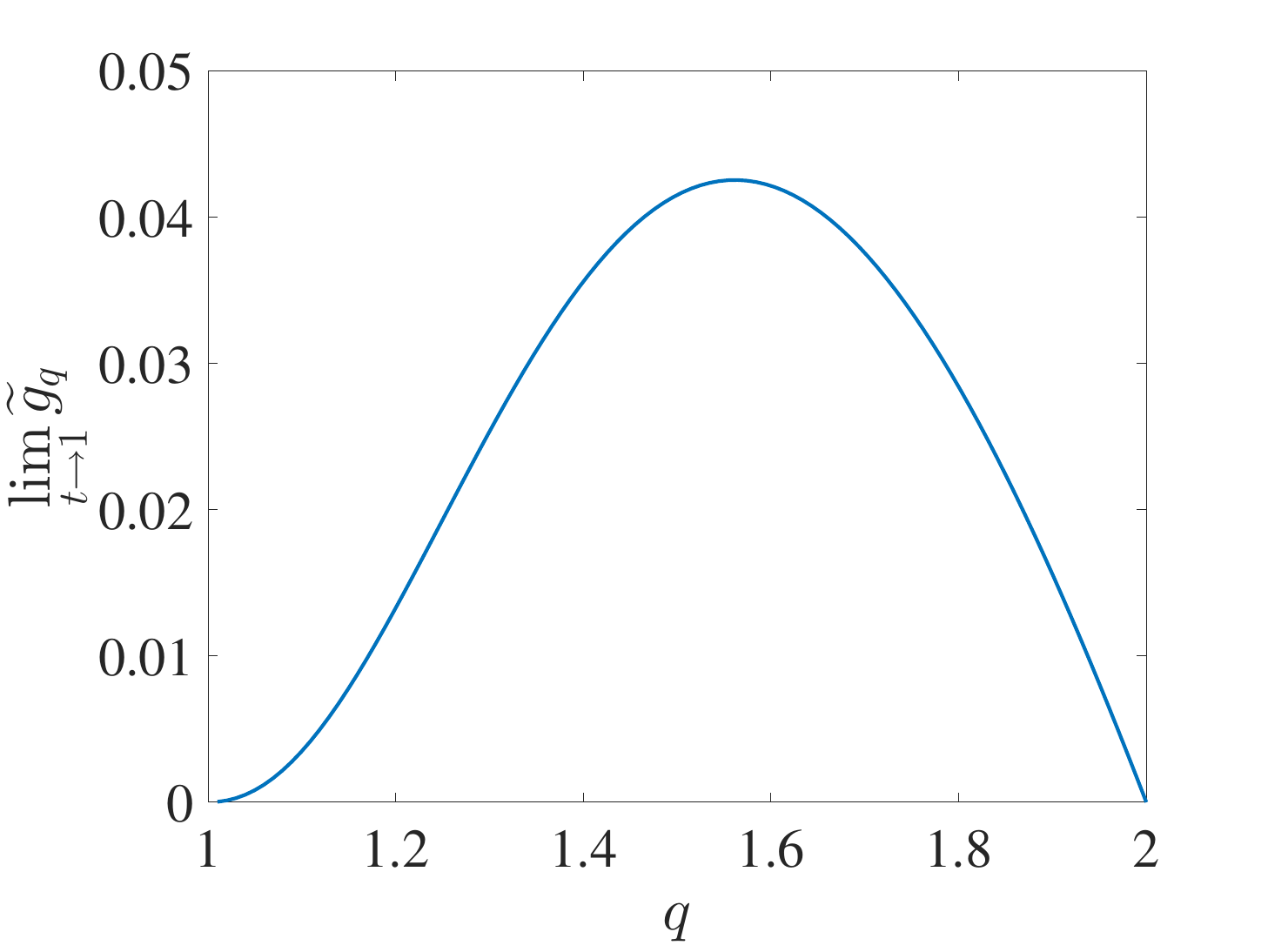}}
\caption{The blue line states that $\lim\limits_{t\rightarrow1}\widetilde{g}_q=\lim\limits_{t\rightarrow1}\frac{\partial^2{h}_q^2}{\partial t^2}\geq0$ for $1<q\leq2$.\\}\label{fig 4}
\end{figure}
Following similar approaches to the proof of Appendix \ref{A}, we first analyze the points within the region $D^\circ$. It is not difficult to obtain that the gradient $\nabla M'(t,q)=0$ is not solvable within the domain $D^\circ$. Therefore, the maximum or minimum must be taken at the boundary points. Meanwhile, we get that the limits of two boundary points are
\begin{equation*}
\begin{aligned}
&\lim\limits_{t\rightarrow0}\widetilde{g}_q=+\infty,\\
&\lim\limits_{t\rightarrow1}\widetilde{g}_q=-\frac{(q^2-3q+2)(4q+q\cdot2^q-3\cdot2^q)}{12(2^q-2)^\frac{2q-2}{q}},\\
\end{aligned}
\end{equation*}
where $\lim\limits_{t\rightarrow1}\widetilde{g}_q\geq0$, as shown in Fig. \ref{fig 4}. This suggests $\widetilde{g}_q\geq0$ in the region $D$, namely, the function $h_q^2(C^2)$ is convex with respect to $C^2$.

\end{appendix}


\begin{thebibliography}{99}
\bibitem{PRL87.047901} N. Linden and S. Popescu, Good dynamics versus bad kinematics: is entanglement needed for quantum computation?
    \href{https://journals.aps.org/prl/abstract/10.1103/PhysRevLett.87.047901} {Phys. Rev. Lett. \textbf{87}, 047901 (2001)}.
\bibitem{PRSLSA459.2011} R. Jozsa and N. Linden, On the role of entanglement in quantum-computational speed-up,
    \href{https://royalsocietypublishing.org/doi/10.1098/rspa.2002.1097} {Proc. R. Soc. London Ser. A \textbf{459}, 2011 (2003)}.
\bibitem{PRL110.060504} M. Van den Nest, Universal quantum computation with little entanglement,
    \href{https://journals.aps.org/prl/abstract/10.1103/PhysRevLett.110.060504} {Phys. Rev. Lett. \textbf{110}, 060504 (2013)}.
\bibitem{PMP86.153} I. M. Georgescu, S. Ashhab, and F. Nori, Quantum simulation,
    \href{https://journals.aps.org/rmp/abstract/10.1103/RevModPhys.86.153} {Rev. Mod. Phys. \textbf{86}, 153 (2014)}.
\bibitem{RMP89.035002} C. L. Degen, F. Reinhard, and P. Cappellaro, Quantum sensing,
    \href{https://journals.aps.org/rmp/abstract/10.1103/RevModPhys.89.035002} {Rev. Mod. Phys. \textbf{89}, 035002 (2017)}.
\bibitem{IJRD48.71} B. M. Terhal, Is entanglement monogamous?
    \href{https://ieeexplore.ieee.org/document/5388928} {IBM J. Res. Dev. \textbf{48}, 71 (2004)}.
\bibitem{RMP81.865} R. Horodecki, P. Horodecki, M. Horodecki, and K. Horodecki, Quantum entanglement,
    \href{https://journals.aps.org/rmp/abstract/10.1103/RevModPhys.81.865} {Rev. Mod. Phys. \textbf{81}, 865 (2009)}.
\bibitem{PRL69.2881} C. H. Bennett and S. J. Wiesner, Communication via one- and two-particle operators on Einstein-Podolsky-Rosen states,
    \href{https://journals.aps.org/prl/abstract/10.1103/PhysRevLett.69.2881} {Phys. Rev. Lett. \textbf{69}, 2881 (1992)}.
\bibitem{PRL70.1895} C. H. Bennett, G. Brassard, C. Cr\'{e}peau, R. Jozsa, A. Peres, and W. K. Wootters, Teleporting an unknown quantum state via dual classical and Einstein-Podolsky-Rosen channels,
    \href{https://journals.aps.org/prl/abstract/10.1103/PhysRevLett.70.1895} {Phys. Rev. Lett. \textbf{70}, 1895 (1993)}.
\bibitem{EPJB41.75} F. L. Yan and X. Q. Zhang, A scheme for secure direct communication using EPR pairs and teleportation,
    \href{https://link.springer.com/article/10.1140/epjb/e2004-00296-4} {Eur. Phys. J. B \textbf{41}, 75 (2004)}.
\bibitem{PRA74.022317} J. M. Renes and M. Grassl, Generalized decoding, effective channels, and simplified security proofs in quantum key distribution,
    \href{https://journals.aps.org/pra/abstract/10.1103/PhysRevA.74.022317} {Phys. Rev. A \textbf{74}, 022317 (2006)}.



\bibitem{EPL84.50001} T. Gao, F. L. Yan, and Y. C. Li, Optimal controlled teleportation,
    \href{https://iopscience.iop.org/article/10.1209/0295-5075/84/50001} {Europhys. Lett. \textbf{84}, 50001 (2008)}.
\bibitem{PRA83.022319} F. L. Yan, T. Gao, and E. Chitambar, Two local observables are sufficient to characterize maximally entangled states of $N$ qubits,
    \href{https://journals.aps.org/pra/abstract/10.1103/PhysRevA.83.022319} {Phys. Rev. A \textbf{83}, 022319 (2011)}.
\bibitem{PRA93.042324} Y. H. Zhou, Z. W. Yu, and X. B. Wang, Making the decoy-state measurement-device-independent quantum key distribution practically useful,
    \href{https://journals.aps.org/pra/abstract/10.1103/PhysRevA.93.042324} {Phys. Rev. A \textbf{93}, 042324 (2016)}.
\bibitem{PRA88.062338} M. D. Reid, Signifying quantum benchmarks for qubit teleportation and secure quantum communication using Einstein-Podolsky-Rosen steering inequalities,
    \href{https://journals.aps.org/pra/abstract/10.1103/PhysRevA.88.062338} {Phys. Rev. A \textbf{88}, 062338 (2013)}.
\bibitem{NJP16.083017} A. Milne, S. Jevtic, D. Jennings, H. Wiseman, and T. Rudolph, Quantum steering ellipsoids, extremal physical states and monogamy,
    \href{https://iopscience.iop.org/article/10.1088/1367-2630/16/8/083017} {New J. Phys. \textbf{16}, 083017 (2014)}.



\bibitem{NP7.399} X. S. Ma, B. Dakic, W. Naylor, A. Zeilinger, and P. Walther, Quantum simulation of the wavefunction to probe frustrated Heisenberg spin systems,
    \href{https://www.nature.com/articles/nphys1919} {Nat. Phys. \textbf{7}, 399 (2011)}.
\bibitem{PRA61.052306} V. Coffman, J. Kundu, and W. K. Wootters, Distributed entanglement,
    \href{https://journals.aps.org/pra/abstract/10.1103/PhysRevA.61.052306} {Phys. Rev. A \textbf{61}, 052306 (2000)}.
\bibitem{PRL96.220503} T. J. Osborne and F. Verstraete, General monogamy inequality for bipartite qubit entanglement,
    \href{https://journals.aps.org/prl/abstract/10.1103/PhysRevLett.96.220503} {Phys. Rev. Lett. \textbf{96}, 220503 (2006)}.

\bibitem{PRA69.022309} M. Koashi and A. Winter, Monogamy of quantum entanglement and other correlations,
    \href{https://journals.aps.org/pra/abstract/10.1103/PhysRevA.69.022309} {Phys. Rev. A \textbf{69}, 022309 (2004)}.
\bibitem{JMP45.829} M. Christandl and A. Winter, ``Squashed entanglement": An additive entanglement measure,
    \href{https://pubs.aip.org/aip/jmp/article/45/3/829/231537/Squashed-entanglement-An-additive-entanglement} {J. Math. Phys. \textbf{45}, 829 (2004)}.
\bibitem{PRA79.012329} J. S. Kim, A. Das, and B. C. Sanders, Entanglement monogamy of multipartite higher-dimensional quantum systems using convex-roof extended negativity,
    \href{https://journals.aps.org/pra/abstract/10.1103/PhysRevA.79.012329} {Phys. Rev. A \textbf{79}, 012329 (2009)}.
\bibitem{PRL113.100503} Y. K. Bai, Y. F. Xu, and Z. D. Wang, General monogamy relation for the entanglement of formation in multiqubit systems,
    \href{https://journals.aps.org/prl/abstract/10.1103/PhysRevLett.113.100503} {Phys. Rev. Lett. \textbf{113}, 100503 (2014)}.
\bibitem{PRA90.062343} Y. K. Bai, Y. F. Xu, and Z. D. Wang, Hierarchical monogamy relations for the squared entanglement of formation in multipartite systems,
    \href{https://journals.aps.org/pra/abstract/10.1103/PhysRevA.90.062343} {Phys. Rev. A \textbf{90}, 062343 (2014)}.


\bibitem{PRA81.062328} J. S. Kim, Tsallis entropy and entanglement constraints in multiqubit systems,
    \href{https://journals.aps.org/pra/abstract/10.1103/PhysRevA.81.062328} {Phys. Rev. A \textbf{81}, 062328 (2010)}.
\bibitem{PRA93.062340} Y. Luo, T. Tian, L. H. Shao, and Y. M. Li, General monogamy of Tsallis $q$-entropy entanglement in multiqubit systems,
    \href{https://journals.aps.org/pra/abstract/10.1103/PhysRevA.93.062340} {Phys. Rev. A \textbf{93}, 062340 (2016)}.
\bibitem{PRA93.022306} W. Song, Y. K. Bai, M. Yang, M. Yang, and Z. L. Cao, General monogamy relation of multiqubit systems in terms of squared R\'{e}nyi-$\alpha$ entanglement,
    \href{https://journals.aps.org/pra/abstract/10.1103/PhysRevA.93.022306} {Phys. Rev. A \textbf{93}, 022306 (2016)}.

\bibitem{PRA73.022325} C. S. Yu and H. S. Song, Global entanglement for multipartite quantum states,
    \href{https://journals.aps.org/pra/abstract/10.1103/PhysRevA.73.022325} {Phys. Rev. A \textbf{73}, 022325 (2006)}.
\bibitem{PRA77.032329} C. S. Yu and H. S. Song, Entanglement monogamy of tripartite quantum states,
    \href{https://journals.aps.org/pra/abstract/10.1103/PhysRevA.77.032329} {Phys. Rev. A \textbf{77}, 032329 (2008)}.
\bibitem{PRL103.050501} A. Kay, D. Kaszlikowski, and R. Ramanathan, Optimal cloning and singlet monogamy,
    \href{https://journals.aps.org/prl/abstract/10.1103/PhysRevLett.103.050501} {Phys. Rev. Lett. \textbf{103}, 050501 (2009)}.
\bibitem{PRA86.062323}  Y. Hong, T. Gao, and F. L. Yan, Measure of multipartite entanglement with computable lower bounds,
    \href{https://journals.aps.org/pra/abstract/10.1103/PhysRevA.86.062323} {Phys. Rev. A \textbf{86}, 062323 (2012)}.
\bibitem{PRA87.032330} M. F. Cornelio, Multipartite monogamy of the concurrence,
    \href{https://journals.aps.org/pra/abstract/10.1103/PhysRevA.87.032330} {Phys. Rev. A \textbf{87}, 032330 (2013)}.
\bibitem{PRL112.180501} T. Gao, F. L. Yan, and S. J. van Enk, Permutationally invariant part of a density matrix and nonseparability of
$N$-qubit states,
    \href{https://journals.aps.org/prl/abstract/10.1103/PhysRevLett.112.180501}  {Phys. Rev. Lett. \textbf{112}, 180501 (2014)}.
\bibitem{PRL114.140402} C. Eltschka and J. Siewert, Monogamy equalities for qubit entanglement from Lorentz invariance,
    \href{https://journals.aps.org/prl/abstract/10.1103/PhysRevLett.114.140402} {Phys. Rev. Lett. \textbf{114}, 140402 (2015)}.
\bibitem{PRA92.062345} X. N. Zhu and S. M. Fei, Generalized monogamy relations of concurrence for $N$-qubit systems,
    \href{https://journals.aps.org/pra/abstract/10.1103/PhysRevA.92.062345} {Phys. Rev. A \textbf{92}, 062345 (2015)}.
\bibitem{PRL117.060501} C. Lancien, S. D. Martino, M. Huber, M. Piani, G. Adesso, and A. Winter, Should entanglement measures be monogamous or faithful?
    \href{https://journals.aps.org/prl/abstract/10.1103/PhysRevLett.117.060501}  {Phys. Rev. Lett. \textbf{117}, 060501 (2016)}.
\bibitem{PRA101.032301} Y. Guo and L. Zhang, Multipartite entanglement measure and complete monogamy relation,
    \href{https://journals.aps.org/pra/abstract/10.1103/PhysRevA.101.032301} {Phys. Rev. A \textbf{101}, 032301 (2020)}.
\bibitem{PRA109.012213} H. Li, T. Gao, and F. L. Yan, Parametrized multipartite entanglement measures,
     \href{https://journals.aps.org/pra/abstract/10.1103/PhysRevA.109.012213} {Phys. Rev. A \textbf{109}, 012213 (2024)}.
\bibitem{PRA110.012405} Y. Guo, Monogamy of $k$-entanglement,
    \href{https://journals.aps.org/pra/abstract/10.1103/PhysRevA.110.012405} {Phys. Rev. A \textbf{110}, 012405 (2024)}.



\bibitem{QIP21.357} H. Li, T. Gao, and F. L. Yan, Tighter monogamy relations in multiparty quantum systems,
    \href{https://link.springer.com/article/10.1007/s11128-022-03706-3} {Quantum Inf. Process. \textbf{21}, 357 (2022)}.
\bibitem{arXiv:2406} H. Li, T. Gao, and F. L. Yan, Bipartite entanglement measures and entanglement constraints,
    \href{https://doi.org/10.1007/s11432-025-4579-1} {Sci. China Inf. Sci. \textbf{68}, 229501 (2025)}.
\bibitem{PRL78.5022} S. A. Hill and W. K. Wootters, Entanglement of a pair of quantum bits,
    \href{https://journals.aps.org/prl/abstract/10.1103/PhysRevLett.78.5022} {Phys. Rev. Lett. \textbf{78}, 5022 (1997)}.
\bibitem{PRL80.2245} W. K. Wootters, Entanglement of formation of an arbitrary state of two qubits,
    \href{https://journals.aps.org/prl/abstract/10.1103/PhysRevLett.80.2245} {Phys. Rev. Lett. \textbf{80}, 2245 (1998)}.
\bibitem{PRA75.034305} Y. C. Ou, Violation of monogamy inequality for higher-dimensional objects,
    \href{https://journals.aps.org/pra/pdf/10.1103/PhysRevA.75.034305} {Phys. Rev. A \textbf{75}, 034305 (2007)}.
\end{thebibliography}

\end{document}